\newtheorem{theorem}{Theorem}[section]
\newtheorem{corollary}[theorem]{Corollary}
\newtheorem{lemma}[theorem]{Lemma}
\theoremstyle{definition}
\theoremstyle{definition}
\theoremstyle{observation}
\newcommand{\comment}[1]{}
\newcommand{\QED}{\mbox{}\hfill \rule{3pt}{8pt}\vspace{10pt}\par}
\def\polylog{\operatorname{polylog}}
\newcommand{\ignore}[1]{}
\newcommand{\eat}[1]{}
\newcommand{\squishlist}{
 \begin{list}{$\bullet$}
  { \setlength{\itemsep}{0pt}
     \setlength{\parsep}{3pt}
     \setlength{\topsep}{3pt}
     \setlength{\partopsep}{0pt}
     \setlength{\leftmargin}{1.5em}
     \setlength{\labelwidth}{1em}
     \setlength{\labelsep}{0.5em} } }
\newcommand{\squishend}{
  \end{list}  }
\def\anisur#1{}
\def\gopal#1{}
\def\atish#1{}
\begin{document}

\title{Near-Optimal Random Walk Sampling in Distributed Networks}

\begin{titlepage}
\author{Atish {Das Sarma} \thanks{Google Research, Google Inc., Mountain View, USA.
\hbox{E-mail}:~{\tt atish.dassarma@gmail.com}} \and  Anisur Rahaman Molla \thanks{Division of Mathematical
Sciences, Nanyang Technological University, Singapore 637371  \hbox{E-mail}:~{\tt anisurpm@gmail.com, gopalpandurangan@gmail.com}.  Supported in part by Nanyang Technological University grant M58110000.} \and Gopal Pandurangan \addtocounter{footnote}{-1} \footnotemark}

\date{}

\maketitle \thispagestyle{empty}

\vspace*{.4in}

\maketitle

\begin{abstract}
Performing random walks in networks is a fundamental primitive that has found  numerous applications in communication networks such as token management, load balancing, network topology discovery and construction,  search, and peer-to-peer membership management. While several such algorithms are ubiquitous, and use numerous random walk samples, the walks themselves have always been performed naively. 

In this paper, we focus on the problem of performing random walk sampling efficiently in a distributed network. Given bandwidth constraints, the goal is to minimize the number of rounds and messages required to obtain several random walk samples in a continuous online fashion. 
 We present the first round and message optimal distributed algorithms that present a significant improvement on all previous approaches. 
The theoretical analysis and comprehensive experimental evaluation of our algorithms show that they perform very well in different types of networks of differing topologies. 

In particular, our results show how several random walks can be performed continuously (when source nodes are provided only at runtime, i.e., online), such that each walk of length $\ell$ can be performed exactly in just $\tilde{O}(\sqrt{\ell D})$ rounds\footnote{Throughout this paper, $\tilde{O}$ hides polylogarithmic factors in the number of nodes in the network} (where $D$ is the diameter of the network), and $O(\ell)$ messages. This significantly improves upon both, the naive technique that requires $O(\ell)$ rounds and $O(\ell)$ messages, and the sophisticated algorithm of \cite{DasSarmaNPT10} that has the same round complexity as this paper but requires $\Omega(m\sqrt{\ell})$ messages (where $m$ is the number of edges in the network). Our theoretical results are corroborated through extensive experiments on various topological data sets. Our algorithms are fully decentralized, lightweight, and easily implementable, and can serve as building blocks in the design of topologically-aware networks.
\end{abstract}

\noindent {\bf Keywords:} Random walks, Random sampling, Decentralized
computation, Distributed algorithms.

\end{titlepage}

\vspace{-0.15in}
\section{Introduction}

Random walks play a central role in computer science, spanning a wide range of areas in both theory and practice, including distributed computing and communication networks. Algorithms in many different applications use random walks as an integral subroutine. Applications in communication networks include token management~\cite{IJ90, BBF04,CTW93}, load balancing~\cite{KR04}, small-world routing~\cite{K00}, search~\cite{ZS06,AHLP01,C05,GMS05,LCCLS02}, information propagation and gathering~\cite{BAS04,KKD01}, network topology construction~\cite{GMS05,LawS03,LKRG03}, checking expander~\cite{DT07}, constructing random spanning trees~\cite{Broder89, BIZ89, BFG+03}, monitoring overlays~\cite{MG07}, group communication in ad-hoc network~\cite{DSW06}, gathering and dissemination of information over a network \cite{AKL+79}, distributed construction of expander networks \cite{LawS03}, and peer-to-peer membership management~\cite{GKM03,ZSS05}. Random walks have also been used to provide uniform and efficient solutions to distributed control of dynamic networks \cite{BBSB04}. \cite{ZS06} describes a broad range of network applications that can benefit from random walks in dynamic and decentralized settings. For further references on applications of random walks to distributed computing and networks, see, e.g.~\cite{BBSB04,ZS06}.

A key purpose of random walks in network applications is to perform node sampling. Random walk-based sampling is simple, local, and robust. Random walks  also require little index or state maintenance which make them especially attractive to self-organizing dynamic networks such as Internet overlay and ad hoc wireless networks~\cite{BBSB04,ZS06}. In this paper we present  efficient distributed random walk sampling algorithms in networks that are significantly faster than the existing and naive approaches and at the same time achieve optimal message complexity. Our experimental results further show that our techniques perform very well in various network topologies. 

While the sampling requirements in different applications vary, whenever a true sample is required from a random walk of certain steps, all applications perform the walks naively --- by simply passing a token from one node to its neighbor: thus performing a random walk of length $\ell$ takes   time and messages that is linear with respect to $\ell$. Such an  algorithm may not scale well as the network size increases and hence it is better to investigate   algorithms with sublinear time and message complexity. Previous work in ~\cite{DasSarmaNPT10} shows how to (partially) overcome this hurdle through a quadratic improvement in time and perform random walks optimally, i.e. in $\tilde{O}(\sqrt{\ell D})$ rounds. However, their algorithm requires a large number of messages for every random walk, depending on the number of edges in the network. The algorithm presented here shows how to perform the walks with {\em optimal} message complexity, i.e. just $O(\ell)$ messages per walk amortized, without compromising at all on the worst case round complexity. Such algorithms can be useful building blocks in the design of {\em topologically (self-)aware} networks, i.e., networks that can  monitor and regulate themselves in a decentralized fashion.  (For
example,  efficiently computing the mixing time or the spectral gap,
allows  the network to monitor connectivity and expansion properties
of the network \cite{DasSarmaNPT10}.)
Further, the previous papers (\cite{DasSarmaNPT10, DNP09-podc}) only considered performing a single walk, or a few walks. Most applications, however, require several walks to be performed in a continuous manner. This continuous processing of walks is of specific importance in distributed networks and our results are applicable in this general framework.\\

\noindent {\bf Our Contributions}\\
{\bf 1.} We introduce the problem of continuous processing of random walks. The objective is for a network to support a continuous sequence of random walk requests from various source nodes and perform node sampling to minimize round and message complexity for each request. \\
{\bf 2.} We present the first algorithm that is efficient in both round complexity as well as message complexity. Our technique and analysis presents almost-tight bounds on the message and round complexity in a widely used network congestion model. \\
{\bf 3.} We perform comprehensive experimental evaluation on numerous topological networks and highlight the effectiveness and efficiency of our algorithm. The experimental results corroborate the theoretical contributions and show that our random walk sampling algorithm performs very well on various metrics for all parameter ranges.

\noindent {\bf Overview.} In the remainder of this section, we discuss the network model, related work, and the formal notation and problem formulation considered in this paper. We present our algorithms, and message and round complexity analyses in Section~\ref{sec:algos}. This rests on some concentration analysis of key random walk properties of our algorithm that we then prove in Section~\ref{sec:conc}. Finally, we present extensive experiments on a various topological networks in Section~\ref{sec:exp}.

\subsection{Distributed Network Model}
We model the communication network as an undirected, unweighted, connected $n$-node graph $G = (V, E)$. Every  node has limited initial knowledge. Specifically, assume that each node is associated with a distinct identity number  (e.g., its IP address). 
At the beginning of the computation, each node $v$ accepts as input its own identity number and the identity numbers of its neighbors in $G$. The node may also accept some additional inputs as specified by the problem at hand. The nodes are allowed to communicate through the edges of the graph $G$. We assume that the communication occurs in  synchronous  {\em rounds}. 
We will use only small-sized messages. In particular, in each round, each node $v$ is allowed to send a message of size $O(\log n)$ through each edge $e = (v, u)$ that is adjacent to $v$.  The message  will arrive to $u$ at the end of the current round. 
This is a  widely used  standard model to study distributed algorithms (e.g., see \cite{peleg, PK09}) and captures the bandwidth constraints inherent in real-world computer  networks . Our algorithms can be easily generalized if $B$ bits  are allowed (for any pre-specified parameter $B$) to be sent through each edge in a round. Typically, as assumed here, $B = O(\log n)$, which is number of bits needed to send a node id in a n-node network.

While this is a nice theoretical abstraction, it still does not motivate the most natural practical difficulties. A well established concern with this model is that for simple operations, the entire network may spawn a large number of parallel messages in order to minimize rounds. This can be very expensive from a practical standpoint. A critical component in the analysis of practical algorithms is the overall message complexity per execution of any algorithm. This becomes even more crucial from the standpoint of continuous processing of algorithms, perhaps even in parallel. Therefore, the goal is to design algorithms that have a low amortized message complexity and minimize the worst case round complexity, both simultaneously. Due to their conflicting nature, few algorithms perform well on both metrics. In this paper we present an algorithm that is near-optimal in terms of messages as well as rounds in parallel. 

\subsection{Related Work and Problem Statement}

\subsubsection*{Applications and Related Work}
Random walks have been used in a wide variety of applications in distributed networks as mentioned previously. We describe here some of the applications in more detail. 

Morales and Gupta~\cite{MG07} discuss about discovering a consistent and available monitoring overlay for a distributed system. For each node, one needs to select and discover a list of nodes that would monitor it. The monitoring set of nodes need to satisfy some structural properties such as consistency, verifiability, load balancing, and randomness, among others. This is where random walks
come in. Random walks are a natural way to discover a set of random nodes that are spread out (and hence scalable), that can in turn be used to monitor their local neighborhoods. Random walks have been used for this purpose in another paper by Ganesh et al.~\cite{GKM03} on peer-to-peer membership management for gossip-based protocols. Morales and Gupta~\cite{MoralesGupta09,MoralesGuptaTwo09} have several more papers in their line of work on AVMON system and similar systems that use several continuous node samples as a way to monitor distributed systems. 

Speeding up distributed algorithms using random walks has been considered for a long time. Besides our approach of speeding up the random walk itself, one popular approach is to reduce the {\it cover time}. Recently, Alon et. al.~\cite{AAKKLT} show that performing several random walks in parallel reduces the cover time in various types of graphs. They assert that the problem with performing random walks is often the latency. In these scenarios where many walks are performed, our results could help avoid too much latency and yield an additional speed-up factor.

A nice application of random walks is in  the design and analysis of expanders. We mention two results here. Law and Siu~\cite{LawS03} consider the problem of constructing expander graphs in a distributed fashion. One of the key subroutines in their algorithm is to perform several random walks from specified source nodes. 
Dolev and Tzachar~\cite{DT07} use  random walks to check if a given graph is an expander. The first algorithm given in \cite{DT07} is essentially to run a random walk of length $n\log{n}$ and mark every visited vertex. Later, it is checked if all vertices have been visited.
Broder~\cite{Broder89} and Wilson~\cite{Wilson96} gave algorithms to generate random spanning trees using random walks and Broder's algorithm was later applied to the network setting by Bar-Ilan and Zernik~\cite{BIZ89}. Recently Goyal et al.~\cite{GoyalRV09} show how to construct an expander/sparsifier using random spanning trees. A variety of such applications can greatly benefit using the random walk sampling techniques presented in this paper.

\subsubsection*{Notation and Problem Statement}
The basic problem we address is the following.
We are given an arbitrary undirected, unweighted, and connected $n$--node (vertex) network $G = (V,E)$ and a random walk request length $\ell$. The goal is to devise a distributed algorithm such that, the algorithm continuously accepts source node inputs $s$, and after each input, the algorithm performs a random walk of length $\ell$ and outputs the ID of a node $v$ which is randomly picked according to the probability that it is the destination of a random walk of length $\ell$ starting at $s$. Throughout this paper, we assume the standard random walk: in each step, an edge is taken from the current node $x$ with probability proportional to $1/deg(x)$ where $deg(x)$ is the degree of $x$. Our goal is to output a true  random sample from the $\ell$-walk distribution starting from $s$. Once the sample has been output, a new request is issued to the algorithm, and this proceeds in a continual manner. The objective is to minimize the round complexity as well as the message complexity for each of these requests. 

For clarity, observe that the following naive algorithm solves the above problem for a single request in $O(\ell)$ rounds and $O(\ell)$ messages: The walk of length $\ell$ is performed by sending a token for $\ell$ steps, picking a random neighbor with each step. Then, the destination node $v$ of this walk sends its ID back (along the same path) to the source for output. Our goal is to perform such sampling with significantly less number of rounds. At the other extreme is the series of work \cite{DNP09-podc,DasSarmaNPT10,NanongkaiDP11} where the round complexity of this single random walk request was heavily optimized by a cleverer algorithm. We mention details about this below, but in essence they proposed an approach to perform this walk in $\tilde{O}(\sqrt{\ell D})$ rounds but in exchange incurred a message complexity of $\Omega(m\sqrt{\ell})$. We would like to improve the messages significantly from here. In particular, we would like the best of both these two extreme worlds, and also support continuous requests in the process. 

The problem of performing just one walk was proposed in~\cite{DNP09-podc} under the name \textit{Computing One Random Walk where Source Outputs Destination (1-RW-SoD)} (for short, this problem will simply be called {\em Single Random Walk} in this paper), wherein the first sublinear time distributed algorithm was provided, requiring $\tilde{O}(\ell^{2/3}D^{1/3})$ rounds ($\tilde{O}$
hides $\polylog(n)$ factors); this improves over the naive $O(\ell)$ algorithm when the walk is long compared to the diameter (i.e., $\ell = \Omega(D \polylog n)$ where $D$ is the diameter of the network). This was the first result to break past the inherent sequential nature of random walks and beat the naive $\ell$ round approach, despite the fact that random walks have been used in
distributed networks for long and in a wide variety of applications. It was further conjectured in \cite{DNP09-podc} that the true number of rounds for this problem is $\tilde O(\sqrt{\ell D})$.

The high-level idea used in the $\tilde{O}(\ell^{2/3}D^{1/3})$-round algorithm in \cite{DNP09-podc} is to ``prepare'' a few short walks in the beginning (executed in parallel) and then carefully stitch
these walks together later as necessary. The same general approach was introduced in~\cite{AtishGP08} to find random walks in data streams with the main motivation of finding PageRank. However, the two models have very different constraints and motivations and hence the subsequent techniques used in \cite{DNP09-podc} and \cite{AtishGP08} are very different. The algorithms in \cite{DasSarmaNPT10} use the same general approach as \cite{DNP09-podc} but exploit certain key properties of random walks  to design even faster sublinear time algorithms; in particular, they show how a random walk can be performed in $\tilde{O}(\sqrt{\ell D})$ rounds. It was then shown in \cite{NanongkaiDP11} that these techniques are optimal in round complexity for performing a single random walk. 

None of these papers considered message complexity though, and did not consider the problem of continuously processing random walk requests. Our current paper is the first to ask about amortized message complexity under this continuous framework and our algorithms continue to hold worst case optimality in round complexity as well.


\section{Theoretical Analysis of Algorithms} \label{sec:algos}
\subsection{Algorithm descriptions}
We first describe the algorithm for single random walk in~\cite{DasSarmaNPT10} and then describe how to extend this idea for continuous random walks. The current algorithm is also randomized and we focus more on the message complexity. The high-level idea for single random walk is to perform many short random walks in parallel and later stitch them together~\cite{DNP09-podc, DasSarmaNPT10}. Then for multiple random walks we choose the source node randomly each time and perform single random walk using the same set of short length walks.     

Our main algorithm for performing continuous random walk each of length $\ell$ is described in {\sc Continuous-Random-Walk} (cf. Algorithm~\ref{alg:continuous-random-walk}). This algorithm uses other algorithms {\sc Pre-Processing} (cf. Algorithm~\ref{alg:pre-processing}) and {\sc Single-Random-Walk} (cf. Algorithm~\ref{alg:single-random-walk}). The {\sc Pre-Processing} function is called only one time at the beginning of {\sc Continuous-Random-Walk}, to perform $\eta d(v) \log n$ short walks of length $\lambda$ from each vertex $v$; once these pre-processed short walks are insufficient to answer a single random walk request, only then is the pre-processing table reconstructed and the algorithm resumes answering single random walk requests accessing the short length walks from the new table. At the end of {\sc Pre-Processing},  each vertex knows the destination IDs of the short walks
that it initiated.

\subsection{Previous Results - Rounds and Messages}
We first restate the main round complexity theory for {\sc Single-Random-Walk} and also state the message complexity of this algorithm. 
\begin{lemma} [Theorem $2.5$ in~\cite{DasSarmaNPT10}]
\label{thm:1-walk}
For any $\ell$, Algorithm {\sc Single-Random-Walk} (cf. Theorem $2.5$ in~\cite{DasSarmaNPT10}) solves the Single Random Walk Problem and, with probability at least
$1-\frac{2}{n}$, finishes in $\Theta\left(\lambda \eta \log{n} + \frac{\ell D}{\lambda}\right)$ rounds.
\end{lemma}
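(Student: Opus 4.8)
The plan is to treat the two phases of {\sc Single-Random-Walk} separately --- the pre-processing phase that builds the short walks, and the stitching phase that assembles an $\ell$-walk from them --- bound the round cost of each, and then argue correctness via the Markov property.

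\emph{Pre-processing phase.} First I would bound the cost of performing, simultaneously from every vertex $v$, a batch of $\eta d(v)\log n$ independent random walks of length $\lambda$. The only difficulty is edge congestion: in a step every token at a node chooses an incident edge uniformly at random, and since a token carries a single $O(\log n)$-bit identifier, an edge can forward only $O(1)$ of them per round. The key claim to establish is that, with probability at least $1-1/n$, at every one of the $\lambda$ steps and at every node $u$ the number of tokens residing at $u$ is $O(\eta d(u)\log n)$. I would prove it by induction on the step index: the initial configuration is proportional to degree; since $d(\cdot)/2m$ is stationary for the walk, ``proportional to degree'' is preserved in expectation by one step; and, conditioned on the configuration at step $t$, the tokens arriving at any node $v$ at step $t+1$ form a sum of independent indicators of mean $O(\eta d(v)\log n)$, so a Chernoff upper-tail bound together with a union bound over the $O(n\lambda)$ node--step pairs closes the induction. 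Given the node bound, a second Chernoff bound shows that w.h.p. every edge carries $O(\eta\log n)$ tokens in every step, hence each step costs $O(\eta\log n)$ rounds and the phase runs in $O(\lambda\eta\log n)$ rounds; the matching $\Omega(\lambda)$ bound is immediate since each short walk has $\lambda$ steps.

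\emph{Stitching phase.} To produce the $\ell$-walk from $s$, the algorithm repeatedly glues a fresh length-$\lambda$ short walk onto the current endpoint: a token at $s$ reads $s$'s table to learn the endpoint $v_1$ of an unused short walk, is routed to $v_1$ (distance at most $D$, hence $O(D)$ rounds by flooding a shortest path), has $v_1$ return the endpoint $v_2$ of one of \emph{its} unused short walks, is routed there, and so on; after $O(\ell/\lambda)$ glued segments the token performs a naive walk of the remaining $\ell \bmod \lambda < \lambda$ steps and its identifier is returned to $s$ in $O(D)$ further rounds. Thus this phase costs $O((\ell/\lambda)D + \lambda) = O(\ell D/\lambda)$ rounds, with a matching $\Omega(\ell/\lambda)$ since that many segments are used. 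Adding the two phases gives $\Theta(\lambda\eta\log n + \ell D/\lambda)$ rounds.

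\emph{Correctness and failure probability.} For correctness I would invoke the Markov property: conditioned on its starting vertex each short walk is an independent length-$\lambda$ random walk, and the $j$-th short walk issued by a node is independent of the event that the stitched walk reaches that node for the $j$-th time; hence the concatenation of the glued short walks with the short naive tail is distributed exactly as an $\ell$-step walk from $s$, and the returned identifier is a true sample. The one event that could spoil this is a node exhausting its $\eta d(v)\log n$ short walks; the number consumed at $v$ is at most the number of visits the $\ell$-walk makes to $v$, which is $O(\eta d(v)\log n)$ with probability at least $1-1/n$ by the random-walk concentration bounds of Section~\ref{sec:conc} for an appropriate constant $\eta$ (and a naive fallback preserves correctness even on that rare event). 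A union bound over this event and the congestion event of the pre-processing phase caps the total failure probability at $2/n$, as claimed. \emph{The main obstacle is the pre-processing congestion claim}: the step-$t$ and step-$(t+1)$ token configurations are dependent, so the Chernoff estimate must be applied conditionally, step by step, and it is exactly the $\log n$ over-sampling factor in the batch size that makes the resulting union bound over all $\lambda$ steps succeed.
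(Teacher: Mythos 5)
Two preliminary remarks: the paper itself offers no proof of Lemma~\ref{thm:1-walk} --- it is quoted directly as Theorem~2.5 of \cite{DasSarmaNPT10} --- so your attempt has to be judged as a reconstruction of that cited proof, and your overall decomposition (congestion bound for the pre-processing phase, $O(D)$ rounds per stitch times $O(\ell/\lambda)$ stitches, correctness via the Markov property plus the event that no node exhausts its short walks) is indeed the structure of the original argument. The problem is in how you close the two probabilistic steps. For the pre-processing congestion, the step-by-step conditional Chernoff induction you propose does not go through: conditioned on the step-$t$ configuration, the mean number of arrivals at $v$ is only bounded by (the inductive constant)$\,\cdot\,\eta d(v)\log n$, and to beat a union bound over the $\Theta(n\lambda)$ node--step pairs at nodes of constant degree with $\eta=O(1)$ you need constant relative slack $\delta$ in each application of Chernoff; the inductive constant then grows like $(1+\delta)^{t}$, i.e.\ exponentially in $\lambda$, while choosing $\delta=O(1/\lambda)$ makes the Chernoff exponent $\delta^{2}\eta d(v)\log n=o(\log n)$ and the union bound fails (recall $\lambda=\tilde{\Theta}(\sqrt{\ell D})\gg\log n$). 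The argument that actually works --- and the one used for the cited theorem --- needs no induction at all: the short walks are mutually independent, so the number of tokens at $u$ at step $t$ is a sum of independent indicators over tokens, and since the initial allocation of $\eta d(v)\log n$ tokens at $v$ is proportional to the stationary distribution, the \emph{unconditional} mean at $u$ is exactly $\eta d(u)\log n$ at every step; a single Chernoff bound per node--step pair, plus a union bound, then yields the $O(\eta\log n)$ per-edge congestion per step and the $O(\lambda\eta\log n)$ round bound.

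The second gap is your treatment of the event that some node runs out of short walks. You delegate it to ``the random-walk concentration bounds of Section~\ref{sec:conc},'' but that section analyzes the continuous multi-walk setting with sources drawn from the degree distribution, and argues only in expectation (Markov's inequality plus repetition) about the utilization fraction $\kappa$; it provides no high-probability bound on how often a \emph{single} $\ell$-length walk from an \emph{arbitrary} source visits a fixed node as a connector, which is exactly what is needed here. The cited proof instead uses the Random Walk Visits Lemma of \cite{DasSarmaNPT10} (w.h.p.\ every node $v$ is visited $\tilde{O}(d(v)\sqrt{\ell})$ times in an $\ell$-step walk), combined with the choice of $\eta$ and $\lambda$, to drive the failure probability down to $2/n$; also note that the algorithm as stated simply fails in that event, so ``a naive fallback preserves correctness'' is not something you can assume without modifying the algorithm. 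Finally, a minor point: your claimed matching lower bounds $\Omega(\lambda)$ and $\Omega(\ell/\lambda)$ do not match the stated $\Theta\left(\lambda\eta\log n+\frac{\ell D}{\lambda}\right)$ (they are off by $\eta\log n$ and $D$ factors), so only the upper bound is actually established by your sketch; that is also all the present paper uses.
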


\newcommand{\mindegree}[0]{\delta}
\begin{algorithm}[]
\caption{\sc Pre-Processing($\eta$, $\lambda$)}
\label{alg:pre-processing}
\textbf{Input:} number of short walks of each node $v$ is $\eta deg(v)\log n$, and desired short walk lengths $\lambda$.\\
\textbf{Output:} set of short random walks of each nodes \\

\textbf{Each node $v$ performs $\eta_v=\eta \deg(v)\log n$
random walks of length $\lambda + r_i$ where $r_i$ (for each $1\leq
i\leq \eta$) is chosen independently at random in the range
$[0,\lambda-1]$.}
\begin{algorithmic}[1]
\STATE Let $r_{max} = \max_{1\leq i\leq \eta}{r_i}$, the random
numbers chosen independently for each of the $\eta_x$ walks.

\STATE Each node $x$ constructs $\eta_x$ messages containing its ID
and in addition, the $i$-th message contains the desired walk length
of $\lambda + r_i$.

\FOR{$i=1$ to $\lambda + r_{max}$}

\STATE Each node $v$ does the
following: Consider each message $M$ held by $v$ and received in the
$(i-1)$-th iteration (having current counter $i-1$). If the message
$M$'s desired walk length is at most $i$, then $v$ stored the ID of
the source ($v$ is the desired destination). Else, $v$ picks a
neighbor $u$ uniformly at random and forward $M$ to $u$ after
incrementing its counter.
%

\ENDFOR

\STATE Send the destination IDs back to the respective sources (this can be done
by sending the destination IDs along the "reverse" path).

\end{algorithmic}
\end{algorithm}

\begin{algorithm}[]
\caption{\sc Single-Random-Walk($s$, $\ell$)}
\label{alg:single-random-walk}
\textbf{Input:} Starting node $s$, and desired walk length $\ell$.\\
\textbf{Output:} Destination node of the walk outputs the ID of $s$.\\

\textbf{Stitch $\Theta(\ell/\lambda)$ walks, each of length in $[\lambda,2\lambda-1]$ }
\begin{algorithmic}[1]
\STATE The source node $s$ creates a message called ``token'' which
contains the ID of $s$

\STATE The algorithm generates a set of {\em connectors}, denoted by
$C$, as follows. (Connectors are the endpoints of the short walks, i.e., the points where we stitch.)

\STATE Initialize $C = \{s\}$

\WHILE {Length of walk completed is at most $\ell-2\lambda$}

  \STATE Let $v$ be the node that is currently holding the token.

  \STATE $v$ uniformly chooses one of its short length sample and let $v'$ be the
  sampled value if any exists (which is a destination of an unused random walk of length between $\lambda$ and $2\lambda-1$). 

  \IF{$v' =$ {\sc null} (all walks from $v$ have already been used up)}


  
  \STATE Algorithm terminates failing this walk.

  \ENDIF

  \STATE $v$ sends the token to $v'$

  \STATE $C = C \cup \{v\}$

\ENDWHILE

\STATE Walk naively until $\ell$ steps are completed (this is at
most another $2\lambda$ steps)

\STATE A node holding the token outputs the ID of $s$

\end{algorithmic}

\end{algorithm}

\begin{algorithm}[H]
\caption{\sc Continuous-Random-Walk($\ell$)} \label{alg:continuous-random-walk}
\textbf{Input:} $\ell$.\\
\textbf{Output:} Continuous $\ell$ length random walk samples from sources nodes presented adversarially or randomly.\\ 
\textbf{Source nodes $S$:} The source node of each walk of length $\ell$ can be presented adversarially or randomly accordingly to some distribution. Let this continuous sequence of source nodes be denoted by ordered set $S$.
\begin{algorithmic}[1]

\STATE Call {\sc Pre-Processing($\eta=1$, $\lambda = 24\sqrt{\ell D} (\log{n})^3$)}

\WHILE{Indefinitely}

\WHILE{Algorithm does not fail (algorithm gets stuck due to insufficient short walks)}

\STATE Select the next source node $s$ from the ordered set $S$. 

\STATE call {\sc Single-Random-Walk($s$, $\ell$)}.

\STATE If Single-Random-Walk returns with fail, exit loop

\ENDWHILE

\STATE Call {\sc Pre-Processing($\eta=1$, $\lambda = 24\sqrt{\ell D} (\log{n})^3$)} again and use this table.

\STATE Rerun request for $s$ and then continue subsequent walks based on random samples. 

\ENDWHILE

\end{algorithmic}

\end{algorithm}

\begin{lemma}\label{thm: message-1-walk} The message complexity of {\sc Single-Random-Walk} is
$O\left(\eta \lambda m \log n + \frac{ \ell D}{\lambda} \right)$ where $m$ is number of edges and $D$ is the diameter of the network.  
\end{lemma}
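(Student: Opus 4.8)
\medskip
\noindent\textbf{Proof plan.} The plan is to charge the messages sent by {\sc Single-Random-Walk} to its two constituent phases --- the one-time {\sc Pre-Processing} that builds the pool of short walks (Algorithm~\ref{alg:pre-processing}) and the stitching loop of Algorithm~\ref{alg:single-random-walk} together with the short naive tail --- and then add the two bounds. (As with the round bound of Lemma~\ref{thm:1-walk}, the pre-processing is counted as part of a single walk here; amortizing this cost over many walks is treated later in the paper.)

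For the pre-processing phase, by the degree-sum formula the total number of short walks launched is $\sum_{v\in V}\eta\,\deg(v)\log n = 2\eta m\log n$, and each of them has length $\lambda+r_i\le 2\lambda-1=O(\lambda)$. Every step of a short walk is the forwarding of one token (a source ID plus a bounded counter, which fits in a single $O(\log n)$-bit message) across one edge, so the forward sub-phase (the \texttt{for} loop of Algorithm~\ref{alg:pre-processing}) sends at most $\sum_{\text{walks}}(\text{length of that walk})\le (2\eta m\log n)\cdot 2\lambda = O(\eta\lambda m\log n)$ messages in total; note that congestion --- several tokens competing for the same edge in the same round --- inflates only the round count, not the message count. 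Returning each destination ID along the reverse path of its walk costs the same number of messages again, so the whole pre-processing phase uses $O(\eta\lambda m\log n)$ messages.

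For the stitching phase, each iteration of the \texttt{while} loop extends the partial walk by the length of one short walk, hence by at least $\lambda$ steps, so it runs $O(\ell/\lambda)$ times before the completed length exceeds $\ell-2\lambda$, after which at most $2\lambda$ naive steps --- contributing only $O(\lambda)=O(\eta\lambda m\log n)$ messages --- finish the walk. Each iteration must deliver the token from the current connector $v$ to the endpoint $v'$ of one of $v$'s stored short walks; although that short walk may be long, $v'$ is a node of $G$, so the token can be routed to it along a path of length at most $D$, at a cost of $O(D)$ messages per stitch and hence $O(\ell D/\lambda)$ messages over all stitches. Adding the three contributions gives $O\!\left(\eta\lambda m\log n+\ell D/\lambda\right)$.

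\smallskip
The pre-processing count is routine; the delicate point, and the main obstacle, is the message accounting for one stitch. The $O(D)$-\emph{round} bound per stitch --- which already underlies the round complexity in Lemma~\ref{thm:1-walk} --- does not by itself give an $O(D)$-\emph{message} bound: a fresh BFS or flood from $v$ would spend $\Theta(m)$ messages per stitch and yield only $O(\ell m/\lambda)$, which need not be absorbed into the other term for small $\lambda$. One therefore has to exhibit a delivery of the token from $v$ to $v'$ using $O(D)$ messages --- e.g.\ routing along a short path via a routing structure (such as a single BFS tree of $G$ with appropriate labels) that is built once at $O(m)$ messages and absorbed into the pre-processing term --- and to check that this, the reverse-path bookkeeping in {\sc Pre-Processing}, and the marking of a short walk as used are each charged only once and stay within the stated terms.
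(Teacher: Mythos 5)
Your proposal is correct and follows essentially the same route as the paper: charge $O(\eta\lambda m\log n)$ messages to the short-walk pre-processing (summing $\eta\deg(v)\log n$ walks of length $O(\lambda)$ over all nodes), and charge $O(D)$ messages per stitch over $O(\ell/\lambda)$ stitches using a BFS-tree-based routing structure built once for $\Theta(m)$ messages, which is absorbed into the first term. The only difference is that you spell out details the paper leaves implicit (the reverse-path return of destination IDs, the naive tail, and why flooding per stitch would be too expensive), which only strengthens the argument.
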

\begin{proof}
For computing $\eta deg(v) \log n$ short walks of length $\lambda$ it uses $\Theta(\lambda \eta deg(v) \log n)$ messages. Since for a single short walk of length $\lambda$ it sends $\lambda$ messages and hence for $n$ nodes it requires $\Theta(\lambda \eta \log n \sum_v{deg(v)}) = \Theta(\lambda \eta m \log n)$ messages. 
For stitching one short walk with another we need to contact the destination ID. This can be done quickly by using a BFS tree.  Note that  the BFS tree needs to be constructed only once \footnote{If we assume that nodes
have access to shortest path routing table, then BFS tree is not needed.} ($\Theta(m)$ messages) and each stitch  uses $O(D)$ messages. Combining these, the lemma follows. 
\end{proof}

In networks such as P2P or overlay networks, if we assume that a node can access quickly (in constant time)
 another node whose ID (IP address) is known, then one can improve the time and message complexity of stitching, saving a $\Theta(D)$ factor.

We now analyze the round and message complexity for {\sc Continuous-Random-Walk} algorithm in the next two subsections. For simplified analysis, we use $\kappa$ to denote the fraction of short walks of the pre-processing table that get used, before the algorithm fails and needs to rerun the pre-processing stage. The next two subsections assume a value of $\kappa$ and prove bounds using it. In the following section, we actually present bounds on $\kappa$ itself to arrive at the main theorem of this paper. To recall other notation, $\eta_v = \eta deg(v) \log n$ is the number of short length walks pre-processed for each node $v$, $\lambda$ is the length of these short walks, $n$ is the number of nodes, $m$ is the number of edges, and $D$ is the diameter of the network.

\subsection{Round Complexity}
\begin{lemma}\label{thm:round-multi-walk}
For any $\ell$, Algorithm {\sc Continuous-Random-Walk} (cf. Algorithm~\ref{alg:continuous-random-walk}) serves continuous random walk requests such that, with probability at least
$1-\frac{2}{n}$, the total number of rounds used until {\sc Pre-Processing} needs to be invoked for a second time is $O\left(\lambda \eta \log{n} + \kappa m \eta D \log n \right)$, where $\kappa$ is the fraction of used short length walks from the preprocessing table.
\end{lemma}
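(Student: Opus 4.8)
The plan is to bound the total number of rounds as the sum of two contributions: the cost of the single invocation of {\sc Pre-Processing} at the start of this phase, plus the cost of all the {\sc Single-Random-Walk} calls made before the pre-processed short walks run out. For the first term, I would directly invoke the round complexity already established for {\sc Pre-Processing}: each node runs $\eta\, deg(v)\log n$ walks of length at most $2\lambda$ (length $\lambda + r_i$ with $r_i \in [0,\lambda-1]$), and these are executed in parallel in a pipelined fashion over $\lambda + r_{\max} = O(\lambda)$ iterations; since by Lemma~\ref{thm:1-walk} the pre-processing portion of {\sc Single-Random-Walk} contributes $\Theta(\lambda \eta \log n)$ to the round count, I get the $O(\lambda \eta \log n)$ term here (this also subsumes the one-time $O(D)$ BFS-tree construction, as $\lambda \gg D$).

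For the second term I would argue as follows. By Lemma~\ref{thm:1-walk}, a single successful call to {\sc Single-Random-Walk} after pre-processing is complete costs $O(\ell D / \lambda)$ rounds (the stitching phase: $\Theta(\ell/\lambda)$ stitches, each routed along the BFS tree in $O(D)$ rounds, plus the final $O(\lambda)$ naive steps which are lower order given the parameter choice). Each such call consumes $\Theta(\ell/\lambda)$ short walks. The phase ends when the algorithm has consumed a $\kappa$ fraction of the pre-processing table, i.e. $\kappa \cdot \sum_v \eta\, deg(v)\log n = \Theta(\kappa \, \eta \, m \log n)$ short walks in total (using $\sum_v deg(v) = 2m$). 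Hence the number of {\sc Single-Random-Walk} calls in the phase is $N = \Theta\!\big(\kappa \eta m \log n \,/\, (\ell/\lambda)\big)$, and the total rounds spent on these calls is $N \cdot O(\ell D/\lambda) = O(\kappa \eta m D \log n)$. Adding the two terms gives the claimed bound $O\big(\lambda \eta \log n + \kappa m \eta D \log n\big)$, and the success probability $1-\tfrac{2}{n}$ is inherited verbatim from Lemma~\ref{thm:1-walk} (the pre-processing and the correctness/time guarantees of the stitching both hold with that probability; a union bound over the polynomially many walks in the phase, or simply re-using the high-probability event for the table, keeps the failure probability at $O(1/n)$ — I would state this carefully, possibly absorbing extra log factors into $\eta$/$\lambda$ if a tighter union bound is needed).

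The step I expect to be the main obstacle is making the counting argument for $N$ fully rigorous: the number of short walks actually consumed per {\sc Single-Random-Walk} call is not exactly $\ell/\lambda$ but lies in a range (short walks have length in $[\lambda, 2\lambda-1]$, and a call may also fail early after consuming fewer), so I need the right averaging/amortization statement — namely that "$\kappa$ fraction used" combined with "each call consumes $\Theta(\ell/\lambda)$ walks in expectation/worst case" yields $N = O(\kappa \eta m \log n \cdot \lambda/\ell)$ without hidden slack. I would handle this by noting that $\kappa$ is \emph{defined} as the fraction of the table used when the phase ends, so $\kappa \, \eta \, m \log n$ (up to constants) is exactly the count of walks drawn, and each call draws at least a constant times $\ell/\lambda$ of them, giving the upper bound on $N$ directly; no probabilistic reasoning about $\kappa$ is needed here since, per the remark before the lemma, $\kappa$ is treated as given and bounded only in the next section. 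The remaining details (the final naive $O(\lambda)$ steps being lower-order, the BFS tree being built once) are routine given the parameter setting $\lambda = 24\sqrt{\ell D}(\log n)^3$.
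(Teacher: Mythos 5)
Your proposal is correct and follows essentially the same route as the paper: charge $O(\lambda\eta\log n)$ rounds to the single {\sc Pre-Processing} call, then bound the stitching cost by (number of short walks consumed) $\times$ $O(D)$ per stitch via the BFS tree, using the definition of $\kappa$ to count $\Theta(\kappa\, m\,\eta\log n)$ consumed short walks. Your per-call accounting ($N$ calls, each costing $O(\ell D/\lambda)$ and consuming $\Theta(\ell/\lambda)$ short walks) is just an arithmetic regrouping of the paper's direct stitch count, and both treat $\kappa$ as given exactly as the paper intends.
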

\begin{proof}
The proof is same as Theorem 2.5 in~\cite{DasSarmaNPT10} for Single Random Walk; the only difference is we are doing continuous walks of same length $\ell$.  Therefore for Continuous Walks, if $\kappa$ is the fraction of used short length walks form the preprocessing table, then a total $O(\kappa m \eta \log n)$ short walks are used. Hence  we need to stitch  $O(\kappa m \eta \log n)$ times  and therefore by Lemma 2.3 in~\cite{DasSarmaNPT10}, contributes $O(\kappa m \eta \log n D)$ rounds. Hence total $O\left(\lambda \eta \log{n} + \kappa m \eta D \log n \right)$ rounds. 
\end{proof}

\begin{corollary}\label{thm:avg-round}
The average number of rounds per random walk of length $\ell$ of {\sc Continuous-Random-Walk} (cf. Algorithm~\ref{alg:continuous-random-walk})  is $ O\left( \frac{\ell}{ \kappa} (\frac{\log{n}}{m} + \frac{\kappa D}{\lambda}) \right)$ with high probability.
\end{corollary}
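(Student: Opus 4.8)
The plan is to obtain the per-walk average by dividing the per-epoch round bound of Lemma~\ref{thm:round-multi-walk} by the number of length-$\ell$ requests that a single pre-processing table can serve, where an \emph{epoch} means the interval between two consecutive invocations of {\sc Pre-Processing}. First I would recall Lemma~\ref{thm:round-multi-walk}: with probability at least $1-\frac{2}{n}$, one epoch takes $O\!\left(\lambda \eta \log n + \kappa m \eta D \log n\right)$ rounds, the first term being the pre-processing cost and the second the total stitching cost over all short walks consumed during the epoch.

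Next I would count the number $N$ of length-$\ell$ walks answered in one epoch. The table built by {\sc Pre-Processing} stores $\eta \deg(v)\log n$ short walks at each node $v$, hence $\sum_{v} \eta \deg(v)\log n = 2\eta m \log n = \Theta(\eta m \log n)$ short walks in total (using $\sum_v \deg(v) = 2m$). By the definition of $\kappa$, exactly a $\kappa$-fraction of these, i.e.\ $\Theta(\kappa \eta m \log n)$ short walks, are consumed before the algorithm gets stuck and must re-run {\sc Pre-Processing}. On the other hand, each call {\sc Single-Random-Walk}$(s,\ell)$ stitches $\Theta(\ell/\lambda)$ short walks, each of length in $[\lambda, 2\lambda-1]$, and so consumes $\Theta(\ell/\lambda)$ entries of the table. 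Dividing gives
\[
N \;=\; \Theta\!\left(\frac{\kappa \eta m \log n}{\ell/\lambda}\right) \;=\; \Theta\!\left(\frac{\kappa \eta m \lambda \log n}{\ell}\right).
\]

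Finally I would amortize: the average number of rounds per $\ell$-walk is the epoch round bound divided by $N$,
\[
O\!\left(\frac{\lambda \eta \log n + \kappa m \eta D \log n}{\kappa \eta m \lambda \log n/\ell}\right)
\;=\; O\!\left(\frac{\ell}{\kappa m} + \frac{\ell D}{\lambda}\right)
\;=\; O\!\left(\frac{\ell}{\kappa}\Big(\frac{1}{m} + \frac{\kappa D}{\lambda}\Big)\right),
\]
which is dominated by the stated bound $O\!\left(\frac{\ell}{\kappa}\big(\frac{\log n}{m} + \frac{\kappa D}{\lambda}\big)\right)$ since $\frac{1}{m}\le\frac{\log n}{m}$, and the ``with high probability'' qualifier is inherited verbatim from Lemma~\ref{thm:round-multi-walk} (the count $N$ being essentially deterministic once $\kappa$ is fixed). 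The only point requiring care --- and the main, though minor, obstacle --- is the bookkeeping for $N$: one must correctly account for the $\Theta(\eta m \log n)$ total short-walk count via $\sum_v \deg(v) = 2m$, and invoke the structural fact about {\sc Single-Random-Walk} that a length-$\ell$ walk is assembled from $\Theta(\ell/\lambda)$ short walks of length $\Theta(\lambda)$, so that the factors of $\eta$, $\lambda$, and $\log n$ cancel as claimed.
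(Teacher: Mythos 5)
Your proposal is correct and follows essentially the same route as the paper: divide the per-epoch round bound of Lemma~\ref{thm:round-multi-walk} by the $\Theta\!\left(\frac{\kappa m \eta \lambda \log n}{\ell}\right)$ length-$\ell$ walks completed per epoch, which is exactly the paper's (more tersely stated) argument. Your extra bookkeeping for $N$ and the observation that the computed $\frac{1}{m}$ term is absorbed into the stated $\frac{\log n}{m}$ term are just fuller versions of what the paper leaves implicit.
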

\begin{proof}
The total number of random walks of length $\ell$ that have been completed successfully by {\sc Continuous-Random-Walk} is $\Theta\left(\frac{\kappa m \eta \lambda \log n}{\ell}\right)$, as total $O(\kappa m \eta \log n)$ short walks each of length $\lambda$ have been used. Hence the bound on the average number of rounds per walk follows.
\end{proof}

\subsection{Message Complexity}
\begin{lemma}\label{thm:message-complexity1} The message complexity of {\sc Continuous-Random-Walk}, until {\sc Pre-Processing} needs to be invoked for a second time, is
$O\left(\eta \lambda m \log n + \kappa m \eta D\log n \right)$ where $\kappa$ is fraction of used short length walks from the preprocessing table. 
\end{lemma}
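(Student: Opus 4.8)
The plan is to reuse the accounting from Lemma~\ref{thm: message-1-walk}, but to charge the costs over one whole pre-processing epoch --- that is, the single {\sc Pre-Processing} call at the start together with all the {\sc Single-Random-Walk} calls it serves before the algorithm gets stuck --- rather than over a single walk. There are three kinds of messages to account for: (i) the messages sent during {\sc Pre-Processing}; (ii) the messages sent while stitching short walks inside the various {\sc Single-Random-Walk} calls; and (iii) the messages sent by the naive walk that finishes each {\sc Single-Random-Walk} call.

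For (i), exactly as in the proof of Lemma~\ref{thm: message-1-walk}: running $\eta\deg(v)\log n$ short walks of length $\Theta(\lambda)$ from every $v$ uses $\Theta(\lambda\,\eta\deg(v)\log n)$ messages at node $v$, hence $\Theta(\lambda\eta\log n\sum_v \deg(v)) = \Theta(\lambda\eta m\log n)$ in total; I would also charge here the one-time construction of the BFS tree used for routing tokens between connectors, which costs $\Theta(m) = O(\lambda\eta m\log n)$ and (as in the footnote to Lemma~\ref{thm: message-1-walk}) can be dispensed with entirely if routing tables are available. For (ii), the definition of $\kappa$ says that at most a $\kappa$ fraction of the $\Theta(m\eta\log n)$ pre-processed short walks are consumed during the epoch, so the total number of stitch operations over all {\sc Single-Random-Walk} calls is $O(\kappa m\eta\log n)$, and each stitch routes a token to a known destination ID through the BFS tree using $O(D)$ messages (Lemma~2.3 of~\cite{DasSarmaNPT10}); this contributes $O(\kappa m\eta D\log n)$ messages. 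For (iii), each {\sc Single-Random-Walk} call ends with a naive walk of at most $2\lambda$ steps, i.e.\ $O(\lambda)$ messages; since (in the regime where the algorithm ever refreshes its table) every such call consumes at least one short walk, the number of calls in the epoch is at most the number of short walks consumed, $O(\kappa m\eta\log n)$, so (iii) totals $O(\kappa\lambda m\eta\log n) = O(\lambda m\eta\log n)$ because $\kappa\le 1$. Summing (i)--(iii) gives $O(\eta\lambda m\log n + \kappa m\eta D\log n)$, as claimed. The failed walk that triggers the next {\sc Pre-Processing} call is handled the same way: it consumes short walks too, and every short walk it consumes is already inside the $\kappa$ budget.

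The only delicate point, and the one I expect to need the most care in a clean write-up, is the charging in (iii): one must argue that the number of {\sc Single-Random-Walk} invocations in an epoch is genuinely controlled by $\kappa$, which requires observing that a request draws from the shared pool of pre-processed walks --- stitching its quota of $\Theta(\ell/\lambda)$ short walks, or failing partway --- so that the accounting cannot be defeated by a stream of requests that each do only $O(\lambda)$ naive work without touching the pool. Once that is pinned down, the rest is the same bookkeeping as in Lemma~\ref{thm: message-1-walk}, scaled up by the epoch's stitch count $O(\kappa m\eta\log n)$.
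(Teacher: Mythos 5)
Your proposal is correct and follows essentially the same accounting as the paper's proof: the pre-processing stage costs $\Theta(\eta\lambda m\log n)$ messages, and the stitching cost is governed by the $O(\kappa m\eta\log n)$ consumed short walks at $O(D)$ messages each (the paper reaches the same $O(\kappa m\eta D\log n)$ by multiplying the per-request cost $O(D\ell/\lambda)$ by the number of completed walks $\Theta(\kappa m\eta\lambda\log n/\ell)$, which is just a rearrangement of your count). Your item (iii), charging the $O(\lambda)$-step naive tails against the first term, is a minor extra care that the paper's proof silently omits but that does not change the approach.
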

\begin{proof}
The message complexity of the stage of {\sc Pre-Processing} is as before. Further, for each subsequent $\ell$ length walk request, an additional $O(D \ell/\lambda)$ messages are used. Also, as before we know that the total number of random walks of length $\ell$ that have been completed successfully by {\sc Continuous-Random-Walk} is $\Theta\left(\frac{\kappa m \eta \lambda \log n}{\ell}\right)$, as total $O(\kappa m \eta \log n)$ short walks each of length $\lambda$ have been used. Therefore the contribution from this towards the total message complexity is $O(D\ell/\lambda * \frac{\kappa m \eta \lambda\log n}{\ell})$ which reduces to $O(mD\eta\kappa\log n)$. Combining these, the lemma follows.
\end{proof}

\begin{corollary}\label{thm:avg-message-complexity} The average number of messages per random walk of length $\ell$ of {\sc Continuous-Random-Walk} is
$ O\left( \frac{\ell}{ \kappa} (1 + \frac{\kappa D}{\lambda}) \right)$. 
\end{corollary}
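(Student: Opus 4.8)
The plan is to simply amortize the epoch message bound of Lemma~\ref{thm:message-complexity1} over the number of $\ell$-length walks completed in that epoch, where by an \emph{epoch} I mean the interval between two consecutive invocations of {\sc Pre-Processing}. First I would recall from the proof of Lemma~\ref{thm:message-complexity1} (and Corollary~\ref{thm:avg-round}) that within one epoch a $\kappa$-fraction of the $\sum_v \eta\deg(v)\log n = \eta m \log n$ pre-processed short walks is consumed, i.e. $\Theta(\kappa m \eta \log n)$ short walks in total; since each successful $\ell$-walk stitches $\Theta(\ell/\lambda)$ short walks, the number of $\ell$-walks served in the epoch is $N = \Theta\!\left(\frac{\kappa m \eta \lambda \log n}{\ell}\right)$.

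Next I would divide the epoch message cost by $N$. By Lemma~\ref{thm:message-complexity1} the epoch uses $O\!\left(\eta\lambda m\log n + \kappa m\eta D\log n\right)$ messages, so the average per walk is
\[
O\!\left(\frac{\eta\lambda m\log n + \kappa m\eta D\log n}{\kappa m\eta\lambda\log n/\ell}\right)
= O\!\left(\frac{\ell}{\kappa} + \frac{\ell D}{\lambda}\right)
= O\!\left(\frac{\ell}{\kappa}\Bigl(1 + \frac{\kappa D}{\lambda}\Bigr)\right),
\]
which is exactly the claimed bound. The first term is the amortized share of the $\Theta(\eta\lambda m\log n)$ pre-processing cost, and the second term is the per-request stitching cost $O(D\ell/\lambda)$, which survives the amortization unchanged.

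There is no real obstacle here: the statement is a one-line consequence of Lemma~\ref{thm:message-complexity1} together with the count of completed walks already established in the proof of Corollary~\ref{thm:avg-round}. The only points worth a sentence of care are (i) that $\kappa \le 1$, so the epoch is well defined and $N \ge 1$ whenever at least one walk completes, and (ii) that the constant hidden in $\Theta(\ell/\lambda)$ for the number of short walks per stitch cancels against the same quantity appearing in $N$, leaving clean constants in the final $O(\cdot)$.
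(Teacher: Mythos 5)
Your proposal is correct and matches the paper's own argument: both divide the epoch message bound of Lemma~\ref{thm:message-complexity1} by the $\Theta\!\left(\frac{\kappa m \eta \lambda \log n}{\ell}\right)$ walks of length $\ell$ completed before {\sc Pre-Processing} must be rerun, yielding $O\!\left(\frac{\ell}{\kappa}\bigl(1+\frac{\kappa D}{\lambda}\bigr)\right)$. Your version just spells out the arithmetic and bookkeeping a bit more explicitly than the paper does.
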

\begin{proof}
From the above Lemma~\ref{thm:message-complexity1} we know that the total number of messages used for computing all walks of  {\sc Continuous-Random-Walk} is $O\left(\eta \lambda m \log n + \kappa m \eta D \log n \right)$. Now the total number of walks of length $\ell$ is $O\left(\frac{\kappa m \eta \lambda \log n}{\ell}\right)$, as total $O(\kappa m \eta)$ short walks each of length $\lambda$. Hence we get the average number of messages per walk by dividing by this.
\end{proof}

Combining the above two corollaries, we get the following. 

\begin{lemma}\label{thm:combined-avg-complexity}
The average number of rounds and messages per random walk of length $\ell$ of {\sc Continuous-Random-Walk} (cf. Algorithm~\ref{alg:continuous-random-walk}) are $ O\left( \frac{\ell}{\kappa} (\frac{\log{n}}{m} + \frac{\kappa D}{\lambda}) \right)$ and $ O\left( \frac{\ell}{ \kappa} (1 + \frac{\kappa D}{\lambda}) \right)$ respectively. 
\end{lemma}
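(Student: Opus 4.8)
The plan is simply to read off the statement from the two corollaries that immediately precede it: the rounds bound is exactly Corollary~\ref{thm:avg-round} and the messages bound is exactly Corollary~\ref{thm:avg-message-complexity}, so the ``proof'' is nothing more than stating the two together. The only content worth spelling out is how each corollary was obtained, namely as an \emph{amortization}: one takes the total cost incurred in a full phase (i.e., from one call of {\sc Pre-Processing} until the next), and divides by the number of length-$\ell$ walks that the phase successfully served.

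Concretely, I would first recall that a single phase uses $O(\kappa m \eta \log n)$ short walks (each of length $\lambda$), so the number of length-$\ell$ walks answered in the phase is $\Theta\!\big(\tfrac{\kappa m \eta \lambda \log n}{\ell}\big)$. Next I would invoke Lemma~\ref{thm:round-multi-walk}, which bounds the total rounds of a phase by $O\!\big(\lambda \eta \log n + \kappa m \eta D \log n\big)$, and Lemma~\ref{thm:message-complexity1}, which bounds the total messages of a phase by $O\!\big(\eta \lambda m \log n + \kappa m \eta D \log n\big)$. Dividing each of these by $\Theta\!\big(\tfrac{\kappa m \eta \lambda \log n}{\ell}\big)$ and simplifying gives the per-walk averages $O\!\big(\tfrac{\ell}{\kappa}(\tfrac{\log n}{m} + \tfrac{\kappa D}{\lambda})\big)$ and $O\!\big(\tfrac{\ell}{\kappa}(1 + \tfrac{\kappa D}{\lambda})\big)$ respectively; in both expressions the first term is the amortized {\sc Pre-Processing} cost and the second is the per-walk stitching cost. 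The ``with high probability'' qualifier is inherited directly from Lemma~\ref{thm:round-multi-walk} (equivalently Corollary~\ref{thm:avg-round}).

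There is essentially no obstacle in this step, because $\kappa$ is carried as a free parameter throughout this subsection. The genuinely hard part — actually bounding $\kappa$ (so that $\ell/\kappa$ collapses to $\tilde O(\ell)$ messages and $\tilde O(\sqrt{\ell D})$ rounds with the chosen $\lambda = 24\sqrt{\ell D}(\log n)^3$) — is exactly the concentration analysis deferred to Section~\ref{sec:conc}, and I would not attempt it here. Thus the proof of this lemma is a one-line combination of Corollaries~\ref{thm:avg-round} and~\ref{thm:avg-message-complexity}.
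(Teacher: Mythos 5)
Your proposal matches the paper exactly: the paper gives no separate argument for this lemma beyond the phrase ``Combining the above two corollaries, we get the following,'' which is precisely your one-line combination of Corollaries~\ref{thm:avg-round} and~\ref{thm:avg-message-complexity}. Your added recap of the amortization (total phase cost from Lemmas~\ref{thm:round-multi-walk} and~\ref{thm:message-complexity1} divided by the $\Theta\bigl(\tfrac{\kappa m \eta \lambda \log n}{\ell}\bigr)$ walks served) is consistent with how the paper derives those corollaries, so there is no gap.
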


\begin{corollary}\label{cor:avg-complexity}
For our choice of $\lambda = \tilde{\Theta}(\sqrt{\ell D})$, the average rounds and messages per random walk becomes $\tilde{O}\left( \frac{\ell}{ \kappa m}  + \sqrt{\ell D} + D \right)$ and $O\left( \frac{\ell}{ \kappa}  + D \right)$ respectively.
\end{corollary}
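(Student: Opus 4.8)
The plan is a direct substitution of the chosen parameter $\lambda = 24\sqrt{\ell D}(\log n)^3$ (so that $\lambda = \tilde{\Theta}(\sqrt{\ell D})$) into the two average-complexity expressions of Lemma~\ref{thm:combined-avg-complexity}, followed only by elementary simplification identifying which term dominates in which regime; there is no new idea here, just bookkeeping.

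First, for the average number of rounds per walk I would start from $O\!\left(\frac{\ell}{\kappa}\bigl(\frac{\log n}{m} + \frac{\kappa D}{\lambda}\bigr)\right) = O\!\left(\frac{\ell\log n}{\kappa m} + \frac{\ell D}{\lambda}\right)$. Substituting the value of $\lambda$ gives $\frac{\ell D}{\lambda} = \frac{\sqrt{\ell D}}{24(\log n)^3} = O(\sqrt{\ell D})$, so the round bound collapses to $O\!\left(\frac{\ell\log n}{\kappa m} + \sqrt{\ell D}\right)$, which is $\tilde{O}\!\left(\frac{\ell}{\kappa m} + \sqrt{\ell D} + D\right)$ after absorbing the $\log n$ into the $\tilde{O}$ and (conservatively) carrying along the additive $D$ term, which accounts for the one-time BFS-tree construction and the $O(D)$-per-stitch overhead implicit in Lemma~\ref{thm:round-multi-walk}.

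Second, for messages I would start from $O\!\left(\frac{\ell}{\kappa}\bigl(1 + \frac{\kappa D}{\lambda}\bigr)\right) = O\!\left(\frac{\ell}{\kappa} + \frac{\ell D}{\lambda}\right)$. Again $\frac{\ell D}{\lambda} = O(\sqrt{\ell D})$, and since by AM--GM $2\sqrt{\ell D} \le \ell + D$ and $\kappa \le 1$, we have $\sqrt{\ell D} \le \ell + D \le \frac{\ell}{\kappa} + D$, so the whole quantity is $O\!\left(\frac{\ell}{\kappa} + D\right)$, as claimed; note that no logarithmic factor survives, which is consistent with the statement being written with plain $O(\cdot)$ rather than $\tilde{O}(\cdot)$.

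The one point that needs a line of care --- and hence the only "obstacle", though it is a minor one --- is to check that the constant $24$ and the $(\log n)^3$ factor in $\lambda$ are indeed swallowed by $\tilde{\Theta}(\sqrt{\ell D})$, and to justify the step $\sqrt{\ell D} = O(\ell/\kappa + D)$ used in the message bound; both reduce to the inequality $2\sqrt{\ell D}\le \ell + D$ together with $\kappa \le 1$. Everything else is pure substitution into Lemma~\ref{thm:combined-avg-complexity}.
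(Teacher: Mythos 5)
Your proposal is correct and follows essentially the same route as the paper: substitute the chosen $\lambda$ into Lemma~\ref{thm:combined-avg-complexity} and then use $\sqrt{\ell D}\le \ell + D$ (with $\kappa\le 1$) to absorb the $\sqrt{\ell D}$ term into $\frac{\ell}{\kappa}+D$ for the message bound. Your extra care about the $(\log n)^3$ factor and the conservatively added $D$ in the round bound only makes explicit what the paper leaves implicit.
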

\begin{proof}
If we put $\lambda = \tilde{\Theta}(\sqrt{\ell D})$ in Lemma~\ref{thm:combined-avg-complexity} then the average round and message becomes  $ \tilde{O}\left( \frac{\ell}{ \kappa m}  + \sqrt{\ell D} \right)$ and $ O\left( \frac{\ell}{ \kappa}  + \sqrt{\ell D} \right)$ respectively. Now $\sqrt{\ell D} \leq \ell + D$.  So the corollary follows. 
\end{proof}

Note that, in the above corollary, $\kappa < 1$ can be small, so that the bounds can become large. We show in the next section that $\kappa$ is a constant and hence our bounds are 
almost optimal.

\section{Concentration Bounds on $\kappa$} \label{sec:conc}

The goal of this section is to present a lower bound on $\kappa$, the fraction of rows of the pre-processing table (or the fraction of all short walks) that get used before the algorithm fails to perform a random walk request, and needs to rerun the pre-processing stage.  All the analysis in this section assumes that sources $S$ in {\sc Continuous-Random-Walk} are sampled according to the degree distribution. While the algorithm {\sc Continuous-Random-Walk} remains meaningful otherwise also, our proofs crucially rely on this random sampling of sources. Obtaining similar theorems for more general sequence of sources in $S$ remains an open question. 

We now present the central theorem that lower bounds $\kappa$.

\begin{theorem} \label{thm:kappabound}
Given any graph $G$, if {\sc Continuous-Random-Walk} is invoked on $\ell = O(m)$ and the source nodes $S$ are chosen randomly proportional to the node degrees, then the algorithm uses up at least $\kappa = \Omega(1)$ fraction of all short walks in {\sc Pre-Processing} table, before a request fails and a second call needs to be made to {\sc Pre-Processing}. 
\end{theorem}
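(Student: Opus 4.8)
The plan is to exhibit a supply-and-demand balance between the short walks stored by {\sc Pre-Processing} and the short walks consumed by {\sc Single-Random-Walk}, and then to argue that the demand is spread evenly enough over the nodes that none is exhausted before a constant fraction of the table is used. With $\eta=1$ the total supply is $\sum_v \deg(v)\log n = 2m\log n$ short walks, node $v$ owning $\deg(v)\log n$ of them. One call to {\sc Single-Random-Walk} consumes exactly one short walk at each of its \emph{connectors} $s=C_0,C_1,C_2,\dots$ --- the successive token holders, i.e.\ the positions of the stitched walk at steps $0,\approx\lambda,\approx 2\lambda,\dots$ --- so it uses $q=\Theta(\ell/\lambda)$ short walks altogether. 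The crucial structural fact is that a source $s$ is drawn from the degree distribution, which is \emph{exactly} the stationary distribution $\pi(v)=\deg(v)/(2m)$ of the walk, and that one stitching step carries the token to the endpoint of a genuine random walk of length in $[\lambda,2\lambda-1]$ (sampling short walks without replacement from the i.i.d.\ pool at a node leaves each sampled walk distributed as a true random walk, and $\pi P^{k}=\pi$ for all $k$); hence every connector is marginally distributed as $\pi$. Consequently, after $T$ served requests the number $X_v$ of short walks consumed at $v$ has $\mathbb{E}[X_v]=\Theta\!\left(Tq\,\pi(v)\right)=\Theta\!\left(Tq\,\deg(v)/(2m)\right)$, which matches the supply $\deg(v)\log n$ up to constants exactly when $T$ is of order $T_0:=2m\log n/q=\Theta(m\lambda\log n/\ell)$; at that point the expected total consumption $\Theta(T_0 q)=\Theta(m\log n)$ is a constant fraction of the whole table. (The hypothesis $\ell=O(m)$ is used only to keep $1\le q\le 2m\log n$ and $T_0\ge 1$ meaningful; and since the table is finite with each request consuming $\ge1$ short walk, a failure does eventually occur.)

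Given this balance, I would fix a small absolute constant $\epsilon>0$ and show that with probability at least $1-1/n$ the algorithm serves the first $T:=\epsilon T_0$ requests without any {\sc Single-Random-Walk} call failing; since each call consumes $q$ short walks, these $T$ requests use $\epsilon T_0 q=\epsilon\cdot2m\log n$ short walks --- an $\epsilon$ fraction of the table --- before a failure can occur, so $\kappa\ge\epsilon=\Omega(1)$. By coupling with the idealized process in which the table is never depleted, the real run survives request $T$ without failure whenever in the idealized process no node $v$ has been a connector $\ge\deg(v)\log n$ times among the first $T$ walks; so it suffices to prove $\Pr[X_v\ge\deg(v)\log n]\le1/n^2$ for each fixed $v$ (in the idealized process) and union-bound over the $n$ nodes. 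In the idealized process the $\ell$-walks are i.i.d., so $X_v=\sum_{t=1}^{T}Y^{(v)}_t$ is a sum of i.i.d.\ terms $Y^{(v)}_t$ (the number of connectors of the $t$-th walk equal to $v$), with $\mathbb{E}[Y^{(v)}_t]=\Theta(q\,\pi(v))$ and $0\le Y^{(v)}_t\le q$, and total mean $\Theta(\epsilon\,\deg(v)\log n)$. The $\Theta(\log n)$ multiplicity deliberately built into the pre-processing table endows each node with a ``budget'' of order $\log n$, just large enough for a Chernoff/Bernstein-type tail bound to survive the union bound over $n$ nodes.

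The technical heart --- and the reason a plain Hoeffding bound with per-term range $q$ (which is polynomially large) is too weak for low-degree nodes --- is to show that $Y^{(v)}_t$ is itself sharply concentrated: it should have variance $O(q\,\pi(v))$ and, more importantly, a light tail $\Pr[Y^{(v)}_t\ge k]\le\big(O(q\,\pi(v))+O(1)\big)^{k-1}$, so that for a low-degree $v$ (where $q\,\pi(v)\ll1$) each walk contributes $0$ with overwhelming probability. The connectors of one walk form a Markov chain on $V$ whose one step is a mixture of the powers $P^{\lambda'}$, $\lambda'\in[\lambda,2\lambda-1]$, and there are only $q=\Theta(\ell/\lambda)$ of them while consecutive ones are $\lambda$ steps apart; since for the chosen $\lambda=24\sqrt{\ell D}(\log n)^3$ we have $q/\lambda=\ell/\lambda^2=\Theta\big(1/(D(\log n)^6)\big)\le1$, the relevant sum of return probabilities $\sum_{j\ge1}\big(P^{j\lambda}(v,v)-\pi(v)\big)$ (and its mixture-chain analogue) should come out $O(q\,\pi(v))+O(1)$ once one applies the standard decay estimates for return probabilities of a (lazy) reversible walk together with the $\lambda$-spacing. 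Feeding the resulting variance/tail bound on $Y^{(v)}_t$ into a Bernstein-type inequality for sums of i.i.d.\ sub-exponential variables yields $\Pr[X_v\ge\deg(v)\log n]\le\exp\big(-\Omega(\deg(v)\log n)\big)\le1/n^2$ for $\epsilon$ small enough, and the union bound together with the reduction above then gives $\kappa=\Omega(1)$ with probability $\ge1-1/n$. I expect the clean justification of this return-probability estimate on an \emph{arbitrary} graph to be the main obstacle --- especially for graphs of tiny global spectral gap, where one cannot simply bound $P^{j\lambda}(v,v)-\pi(v)$ by $(1-\gamma)^{j\lambda}$ and must instead argue that the slow eigenmodes contribute little to the return probability at a \emph{fixed} vertex; a safe fallback is to first establish only $\mathbb{E}[\kappa]=\Omega(1)$ from the expectation computation in the first paragraph and then boost it.
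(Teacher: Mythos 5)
Your setup is the same as the paper's: degree-proportional sources make every connector marginally stationary, so over $T$ requests the expected consumption at $v$ is $\Theta(Tq\deg(v)/2m)$, and balancing this against the supply $\deg(v)\log n$ identifies the right scale of $T$ and the target $\kappa=\Omega(1)$. But the theorem does not follow from that expectation computation, and the step you yourself call the technical heart --- a tail bound on $Y^{(v)}_t$ via $\lambda$-spaced return probabilities, fed into a Bernstein bound to get $\Pr[X_v\ge \deg(v)\log n]\le 1/n^2$ --- is exactly the part left unproven, and it is genuinely problematic on arbitrary graphs. Between consecutive connectors the walk takes only $\lambda=\tilde{\Theta}(\sqrt{\ell D})$ steps, which is in general far below the mixing time; conditioned on one connector landing at a low-degree vertex $v$ inside a low-conductance set, later connectors can remain trapped near $v$ and revisit it far more often than $q\pi(v)$ suggests. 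Your proposed tail $\bigl(O(q\pi(v))+O(1)\bigr)^{k-1}$ is not a decaying bound unless the $O(1)$ term is shown to be a constant strictly below $1$, and for $\deg(v)=O(1)$ the whole per-node budget is only $\Theta(\log n)$, so bursts of even polylogarithmic size within a single walk destroy the $\exp(-\Omega(\deg(v)\log n))$ tail you need to survive the union bound. The fallback of proving only $E[\kappa]=\Omega(1)$ is weaker than the stated theorem (and ``boost it'' is not an argument), so as it stands the proposal has a real gap at its central step.

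The paper's own proof sidesteps this difficulty with a much more elementary device: it spends the $\Theta(\log n)$ multiplicity in the table not as headroom for a per-node Chernoff bound over walks, but as $c\log n$ independent repetitions of a small experiment. In each block a node gets a budget of $d(v)$ short walks, $K=\alpha\cdot 2m\lambda/\ell$ requests are served, and only Markov's inequality is used: $E[N]=\alpha d(v)$, so the block overflows at $v$ with probability at most $\alpha$, a constant. Concentration is then applied only to the $0$--$1$ success indicators across the independent blocks (a plain Chernoff bound, needing no control of within-walk clustering or return probabilities), followed by a union bound over the $n$ vertices. Markov's inequality absorbs precisely the bursty per-walk visit behavior that blocks your route, because it needs nothing beyond the expectation you already computed; if you replace your Bernstein step with this blocking/repetition argument, the stationarity observation and the supply--demand bookkeeping of your first paragraph carry over unchanged.
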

\begin{proof}
Assume for now that we do $d(v)$ short walks for each vertex $v$. The total number of walks of length $\ell$ is $ T = \frac{2m \lambda}{\ell}$ if all the short walks are used. 
Let $K = \alpha T$, where $\alpha$ is a constant in $[0, 1]$. 
Note that if we manage to perform  $K$ walks of length $\ell$, then we have utilized a constant fraction
of the short walks. For one $\ell$-length walk, in expectation a vertex $v$ can be a connector at most $\frac{d(v) \ell}{2m \lambda}$ times (by linearity of expectation). (Connectors are the endpoints of the short walks, i.e., the points where we stitch. Note that only when
a vertex is visited as a connector we end up using a short walk initiated from that vertex.) Then for $K$ walks, each of length $\ell$, the expected number of times that $v$ is visited as a connector vertex is $K \frac{d(v) \ell}{2m \lambda} = \alpha d(v)$. Let $N$ denote the number of times the vertex $v$ is visited as a connector in $K$ walks. By above, $ E[N] = \alpha d(v) $. 
By Markov's inequality, $\Pr\left(N \geq d(v) \right) \leq \frac{\alpha d(v)}{d(v)} = \alpha$.
 Now consider the above experiment (for a fixed vertex $v$) repeated $c\log n$ independent times for some constant $c$, that is suitably large. (In other words, assume that we do  $c d(v) \log n$ short walks --- total over
 all experiments --- from each node $v$.)
We say that an experiment is  "success'' if $N < d(v)$.  If we have success,
then that means that we have done $K$ walks of length $\ell$ (and hence utilized a constant fraction of the $d(v)$ short walks) for that experiment before a request fails. By above,
the probability of success is at least some constant $\alpha' = 1 - \alpha$. 
 Let $X_1^v, X_2^v, \ldots, X_{c\log n}^v$ be the 0-1 indicator random variables such that 
$X_i^v = 1$ (if success occurs in $i$-th time)  and zero otherwise.
Let $X^v = \sum_{1=1}^{c \log n} X_i^v $. Then $E[X^v]  = \alpha' c \log n$. Since the variables are independent, by Chernoff's bound
$ \Pr(| X^v - E[X^v] | \geq c' \log n) \leq  e^{-\frac{2c'^2\log^2 n}{c \log n}} \leq \frac{1}{n^2}$,
for a suitable constant $c \leq (c')^2$.
Therefore,  $ \Pr(| X^v - E[X^v] | \geq c' \log n) \leq \frac{1}{n^2}$.  Thus, at least a constant
fraction of the experiments succeed with probability $1 - 1/n^2$.    By union bound \cite{MU-book-05},
the total number of visits to every vertex $v$ as connector in all  ($c \log n$  times $K$) walks is at most $O(\alpha d(v) c \log n)$ with probability at least $1 - 1/n$.  This implies that the total number of short walks utilized is
a constant fraction of the best possible, before a request fails.
\end{proof}

We now present the main theorem of this paper, which follows from the above bound on $\kappa$ stated in Theorem~\ref{thm:kappabound}, and the message and round complexity bounds in terms of $\kappa$ stated in Corollary~\ref{cor:avg-complexity}. Notice that this presents optimal round and message complexities simultaneously for every walk (since independently also $\Omega(\ell + D)$ is a clear lower bound on the number of messages for a single $\ell$-length random walk, and $\Omega(\sqrt{\ell D} + D)$ is a nontrivial lower bound on the number of rounds for a single $\ell$-length random walk as shown in~\cite{NanongkaiDP11}).

\begin{theorem}
Algorithm {\sc Continuous-Random-Walks} satisfies walk requests continuously and indefinitely such that the amortized message complexity per walk is $O(\ell + D)$, and, with high probability, every single walk request completes in $\tilde{O}(\sqrt{\ell D} + D)$ rounds.
\end{theorem}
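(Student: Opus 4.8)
The plan is to derive the final theorem by simply substituting the constant lower bound on $\kappa$ from Theorem~\ref{thm:kappabound} into the $\kappa$-dependent average-complexity bounds of Corollary~\ref{cor:avg-complexity}, and then arguing that the "amortized" per-walk bounds obtained this way in fact hold indefinitely (not just until the second call to {\sc Pre-Processing}). Concretely, Corollary~\ref{cor:avg-complexity} states that for the choice $\lambda = \tilde\Theta(\sqrt{\ell D})$, the average number of rounds per walk is $\tilde O\!\left(\frac{\ell}{\kappa m} + \sqrt{\ell D} + D\right)$ and the average number of messages per walk is $O\!\left(\frac{\ell}{\kappa} + D\right)$. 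By Theorem~\ref{thm:kappabound}, whenever $\ell = O(m)$ and the sources are sampled proportionally to degrees, we have $\kappa = \Omega(1)$ with high probability; hence $\frac{\ell}{\kappa m} = O(\ell/m) = O(1)$ and $\frac{\ell}{\kappa} = O(\ell)$. Plugging in, the average rounds per walk becomes $\tilde O(\sqrt{\ell D} + D)$ and the average messages per walk becomes $O(\ell + D)$, which is exactly the claimed amortized message bound.

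The second ingredient is to upgrade the round bound from "average over a pre-processing epoch" to "every single walk request, with high probability." Here I would invoke Lemma~\ref{thm:round-multi-walk} together with Lemma~\ref{thm:1-walk}: once a {\sc Pre-Processing} table with $\lambda = 24\sqrt{\ell D}(\log n)^3$ is in place, each call to {\sc Single-Random-Walk} finishes in $\Theta(\lambda\eta\log n + \ell D/\lambda)$ rounds with probability $\ge 1 - 2/n$. The pre-processing cost $\lambda\eta\log n = \tilde O(\sqrt{\ell D})$ is paid once per epoch and amortizes to $O(1)$ per walk since each epoch (by Theorem~\ref{thm:kappabound}) answers $\Theta(\kappa m\eta\lambda\log n/\ell) = \Omega(m\lambda\log n/\ell)$ requests — but for the worst-case-per-walk claim the point is rather that the per-walk stitching work is $\ell D/\lambda + O(D\ell/\lambda) = \tilde O(\sqrt{\ell D})$ with the $+D$ term coming from the one-time BFS-tree construction charged to the first walk of the epoch (or, in the routing-table model, absorbed). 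Taking a union bound over the polynomially many walk requests per epoch (and over epochs, using that the high-probability guarantees refresh each time) gives that every walk completes in $\tilde O(\sqrt{\ell D} + D)$ rounds with high probability. Correctness (that each output is a true $\ell$-step sample) is inherited directly from the correctness of {\sc Single-Random-Walk} in Lemma~\ref{thm:1-walk} and the fact that {\sc Continuous-Random-Walk} only ever reuses a short walk once, so no stitched walk is biased; when the table is exhausted, {\sc Pre-Processing} is rerun and the failed request is reissued against the fresh table.

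The main obstacle — and the place where the argument is genuinely delicate rather than bookkeeping — is the interaction between Theorem~\ref{thm:kappabound} and the "indefinitely" quantifier. Theorem~\ref{thm:kappabound} guarantees $\kappa = \Omega(1)$ for a single epoch with probability $1 - 1/n$; to claim the amortized bound holds \emph{indefinitely} one needs these guarantees to compose across an unbounded number of epochs. I would handle this by noting that each epoch is long (answers $\mathrm{poly}(n)$ requests), so over any polynomial time horizon only polynomially many epochs occur and a union bound still yields high probability; alternatively one states the bound as holding for each epoch / up to any fixed polynomial number of requests. A second subtlety is that Theorem~\ref{thm:kappabound} as stated assumes $\eta = 1$ but "$d(v)$ short walks per vertex, repeated $c\log n$ times," matching the $\eta\deg(v)\log n$ walks of {\sc Pre-Processing}; I would make explicit that the $c\log n$ repetitions in that proof are exactly the $\log n$ factor in $\eta_v$, so the two match up. Finally, I would remark (as the paper does) that the lower bounds $\Omega(\ell + D)$ on messages and $\Omega(\sqrt{\ell D} + D)$ on rounds for a single $\ell$-walk show these bounds are optimal up to the polylog factors hidden in $\tilde O$, which is what makes the theorem a "near-optimal" statement rather than merely an upper bound.
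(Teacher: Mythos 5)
Your proposal follows essentially the same route as the paper, which derives this theorem in a single step by substituting the $\kappa = \Omega(1)$ bound of Theorem~\ref{thm:kappabound} into the $\kappa$-dependent averages of Corollary~\ref{cor:avg-complexity} (with $\ell = O(m)$ and degree-proportional sources as standing assumptions, and the worst-case per-walk round bound coming from the {\sc Single-Random-Walk} guarantee of Lemma~\ref{thm:1-walk}). Your additional care about amortizing the per-epoch {\sc Pre-Processing} cost, matching the $c\log n$ repetitions to $\eta_v = \eta\deg(v)\log n$, and union-bounding across epochs to justify ``indefinitely'' only fills in details the paper leaves implicit, so the argument is correct and consistent with the paper's.
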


\subsection{Extensions to different walk lengths}

While our main algorithm of {\sc Continuous-Random-Walk} and the associated theorems are stated for a fixed $\ell$, they can call be generalized to handle different walk lengths. We omit the rigorous details for brevity and present a brief explanation of the generalization here. The theorems and experiments go through verbatim for this case as well.

Suppose that {\sc Continuous-Random-Walk} is designed to not only support new source node requests each time but also new length requests for the random walks. 
One can of course store multiple {\sc Pre-Processing} tables, one for each associated $\ell_i$, in the entire allowed range for $\ell$. This way, when a request is presented, the appropriate {\sc Pre-Processing} table is accessed and the corresponding short walks queried. Then, whenever {\sc Continuous-Random-Walk} fails on a specific single random walk request, 
only this {\sc Pre-Processing} is rerun, and answering the random walk requests resume. 

While this does solve the problem and guarantees the identical throughput and efficiency, a practical concern is that performing and storing so many short walks, corresponding to multiple different lengths, can be expensive. There is a simple way to counter this, by storing short walks in a doubling fashion. In particular, if each $\ell_i$ was in the range $[1, n]$, instead of storing short walks corresponding to each of $\ell_i = 1, 2, 3, \ldots, n-1, n$, we perform short walks only corresponding to $\ell_i = 1, 2, 4, \ldots, n/2, n$. 
This exponentially reduces the number of short walks at each node, or the number of pre-processing tables, from $n$ to $\log n$. Now, whenever a walk request for $\ell_i$ is received, it can be answered by just performing a longer walk, of length $\tilde{\ell}_i$ such that $\ell_i\leq \tilde{\ell}_i\leq 2\ell_i$.

\section{Experiments} \label{sec:exp}
We have used the following five important graph generative models for experiments. Several of these have been used in other papers as well for random walk experiments, see for e.g. \cite{GkantsidisMS04}. These graphs together cover a nice spectrum of fast mixing to slow mixing, uniform degrees to very skewed degrees, small diameter to large diameter, etc. thereby testing the algorithm in all the extreme cases as well as nice cases.  

\begin{itemize}
\item Regular Expander: We worked on the most commonly studied random graph model of $G(n,p)$. Here, each of the $n(n-1)/2$ edges occurs independently and randomly with probability $p$. We choose $p$ as $\log n/n$ so that the expected number of edges is roughly $(n\log n)/2$. Further, the expected degree of every vertex is $\log n$. This, with high probability, results in a graph with good expansion and it is regular in expectation. 
\item Two-tier topologies with clustering: First we construct four isolated roughly regular expanders, as mentioned above in $G(n,p)$, of the same size - think of these as independent clusters. Then from each cluster we pick a small number of nodes (roughly one-fourth the size of the cluster and connect them using another $G(n,p)$ - think of this as a tier-two cluster. Again we use the same value of $p$ as above.   
\item Power-law graphs: In distributed settings, many important networks are known to have power-laws. We use the well known preferential attachment growth model to construct random power-law graphs. The essential process proceeds by starting with a small clique (of same 5 nodes), and then adding vertices sequentially. Each subsequent vertex added connects with an edge to each of the previous edges with probability depending on their degrees, and independently. Specifically, the new vertex connects with a previous vertex $v$ with probability proportional to $deg(v)^{\alpha}$ where the exponent $\alpha$ is a parameter. 
\item Random Geometric Graph: A random geometric graph is a random undirected graph drawn on a bounded region $[0,1)\times [0,1)$. It is generated by placing $n$ vertices uniformly at random and independently on the region (i.e. both the $x$ and $y$ coordinates are picked uniformly and independently). Then edges are constructed deterministically - two vertices $u$ and $v$ are connected by an edge if and only if the distance between them is at most a parameter threshold $r$. We choose $r$ as $\sqrt{\frac{\log n}{n}}$ so that the degree of each vertices is $O(\log n)$ w.h.p. 
\item Grid Graph: Consider a square grid graph ($\sqrt{n}\times \sqrt{n}$) which is a Cartesian product of two path graphs with $\sqrt{n}$ vertices on each. Since a path graph is a median graph, the square grid graph is also a median graph. All grid graphs are bipartite (since they have no odd length cycles).
\end{itemize}

We compute and maintain a preprocessing table containing $\eta_v = \eta deg(v)\log n$ short walks of length $\lambda$ from each vertex $v$. We then check how many walks of length $\ell$ can be done using this table before we hit a node all of whose short walks have been exhausted. The source nodes for each of the $\ell$ length random walk requests are sampled randomly according to the degree distribution. 

We perform experiments on each of the aforementioned synthetically generated graphs, and also by varying different parameters. In particular, we conduct separate experiments for each of (a) varying the length of the walk ($\ell$) as a function of $n$, (b) varying the number of nodes($n$), (c) varying the length of the short walk ($\lambda$) stored by the preprocessing table, and (d) varying the number of short walks stored from each node as a function of the parameter $\eta$. For each of these, we use certain default values when a specific parameter is being varied for a plot, while others are held constant. The default values we use are $n = 10,000$, $\ell = n$, $\eta = 1$, and $\lambda = \log n$. 

Since we are interested in how many random walks of length $\ell$ can be done in a continuous manner with small round and message complexity, this translates to analyzing the utilization for one specific pre-processing table before {\sc Continuous-Random-Walk} gets stuck and needs to invoke another call to {\sc Pre-Processing}. In particular, to analyze the round complexity, we conduct a set of experiments to evaluate $\kappa$, the number of rows of the {\sc Pre-Processing} table used before the algorithm fails ($\kappa$ plotted on the $y$-axis). As mentioned in the previous section, this gives a bound of $\ell/\kappa$ on the round complexity. In particular, if $\kappa$ is a constant, and large enough, this shows excellent utilization and an asymptotically optimal round complexity. Similarly, for message complexity, we explicitly conduct a second set of plots that calculates the message complexity on the $y$-axis based on $\kappa$ and $D$, for easier visualization.  

We plot graphs by varying each of the parameters $\ell, n, \lambda, \eta$. Each figure contains fives lines, one for each of the above network models: For each of these plot values, we perform ten different runs and then present the average value. 

\subsection{Short walk utilization factor $\kappa$}

\begin{figure}[htbp]
  \centering
  \includegraphics[width=0.6\linewidth]{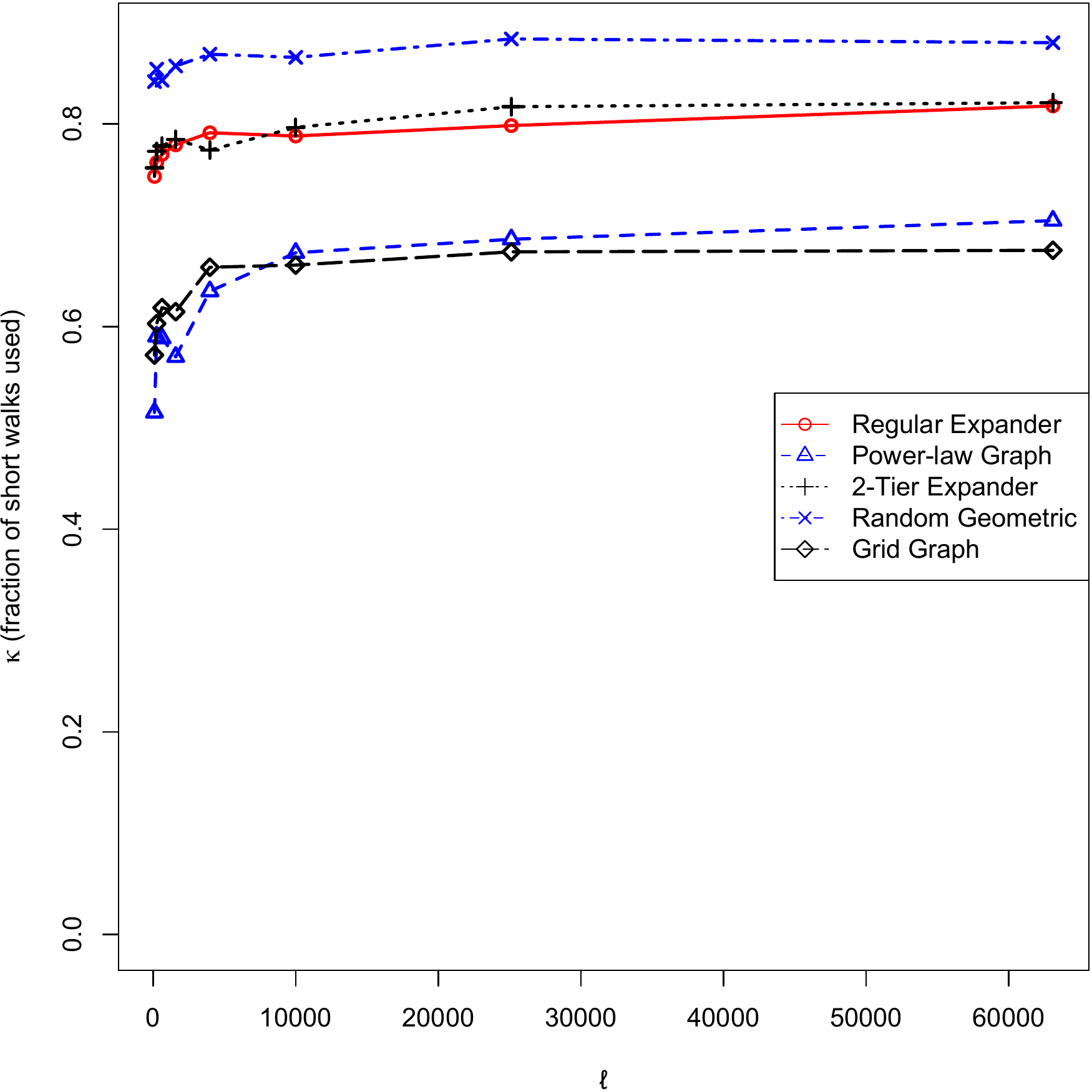}\\
  \caption{varying length of the walk $\ell$. $n=10K, \eta = 1, \lambda = \sqrt{\ell}$}
  \label{fig:plot1c}
\end{figure}

\noindent{\bf Varying $\ell$} [Figure~\ref{fig:plot1c}]: Here $n$ is fixed at 10,000 and $\ell$ is varying as $n^{0.5}, n^{0.6}, ... ,n^{1.2}$; $\lambda$ is $\sqrt{\ell}$ and $\eta$ is $\log n$.  In this case we see that at least 50\% of the pre-processed short walk rows are used up. This utilization is even better for some of the graph topologies such as $G(n,p)$ and two-tier clustering graph and reaches around 80\%. Therefore, for the entire range of $\ell$ being small to very large, our algorithm performs extremely well: In particular, $\kappa$ is a large constant and therefore the round complexity and message complexity are close to optimal - i.e. within a constant factor of the best possible. 
%
%

\begin{figure}[htbp]
  \centering
  \includegraphics[width=0.6\linewidth]{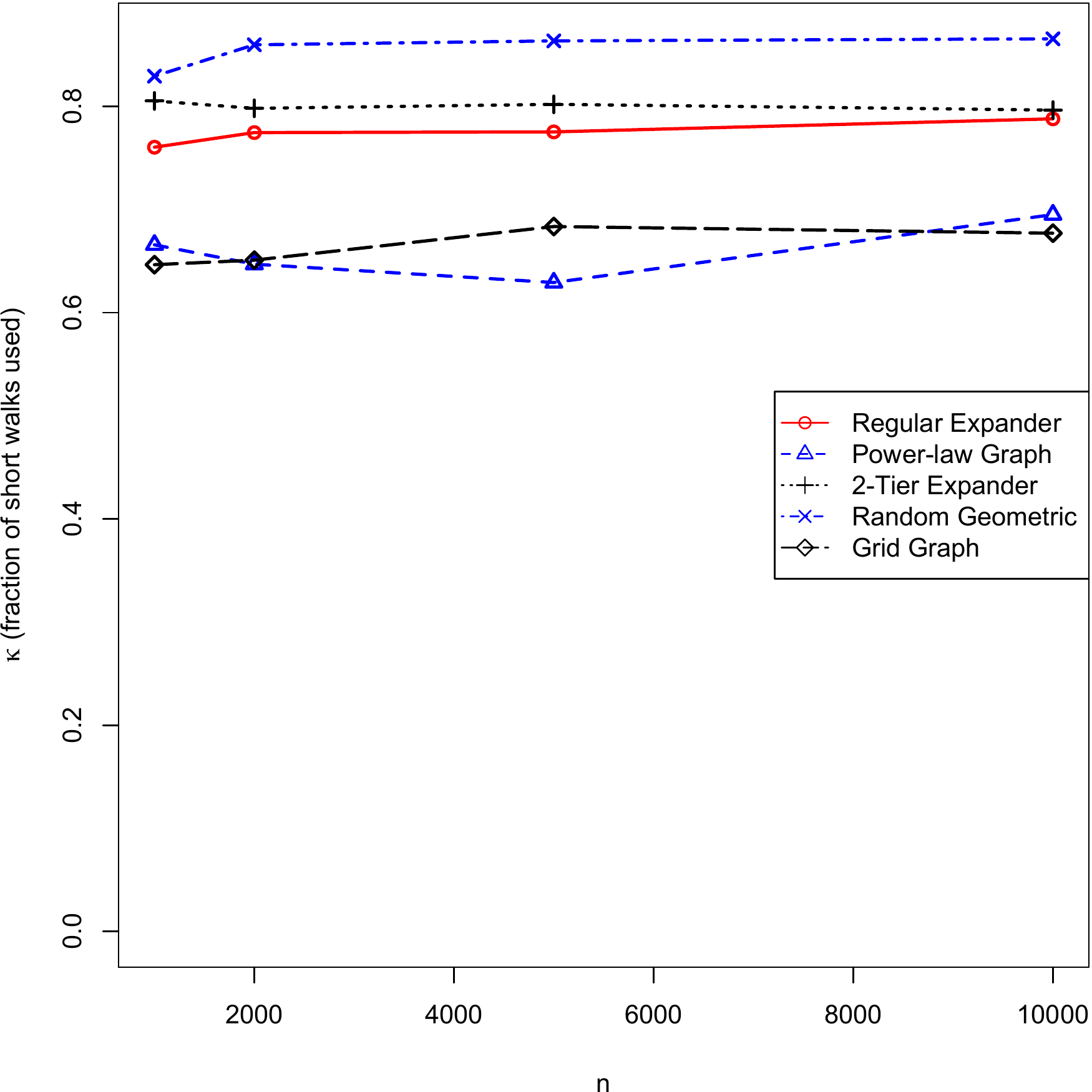}\\
  \caption{varying number of nodes n. $\ell =n$, $\eta = 1, \lambda = \sqrt{\ell}$}
  \label{fig:plot2}
\end{figure}

\noindent{\bf Varying $n$} [Figure~\ref{fig:plot2}]: The number of nodes $n$ is varying between 1000 and 10,000. We see that in all of the graphs, the utilization of pre-processed short walks before the algorithm terminates is at least 60\%. We also see that in some of the graphs, the utilization is substantially higher. There even as the graph size scales, our performance remains equally good. 

\begin{figure}[htbp]
\centering
\includegraphics[width = 0.6\linewidth]{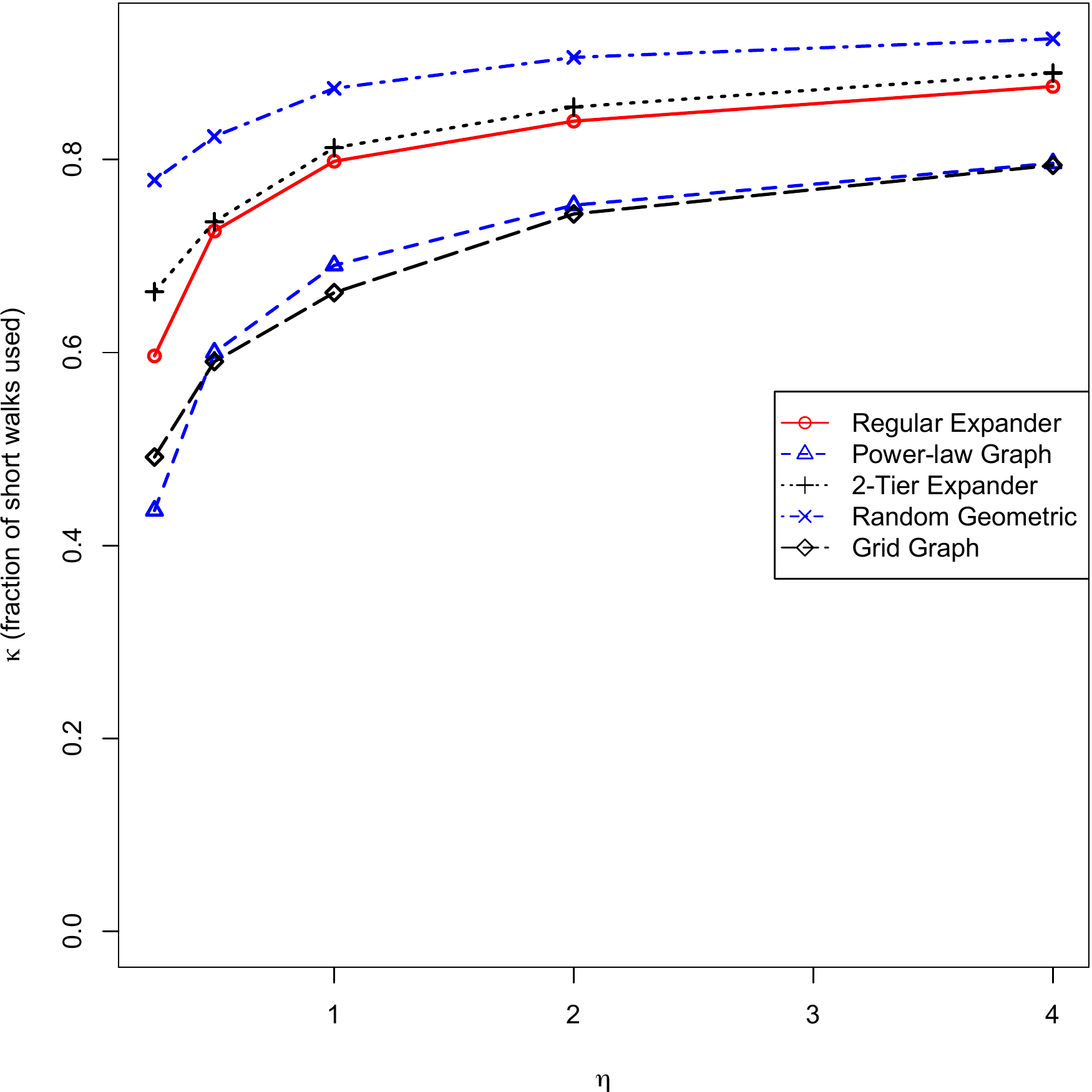}\\
\caption{varying number of short walks $\eta$. $n = 10K, \ell =n, \lambda = \sqrt{\ell}$}
\label{fig:plot3}
\end{figure}

\noindent{\bf Varying $\eta$} [Figure~\ref{fig:plot3}]: We see that the used fraction of rows is increasing with the number of short length walk $\eta$. We see that even for small enough $\eta$ of 1, on all algorithms, the utilization $\kappa$ on the $y$-axis is at least $0.6$, or 60\% of all the short walks get used. This means that for each node $v$, the number of short length walk $d(v)\log n$ suffice, therefore the round and message complexity remain near-optimal as proved previously.  

\begin{figure}[htbp]
  \centering
  \includegraphics[width=0.6\linewidth]{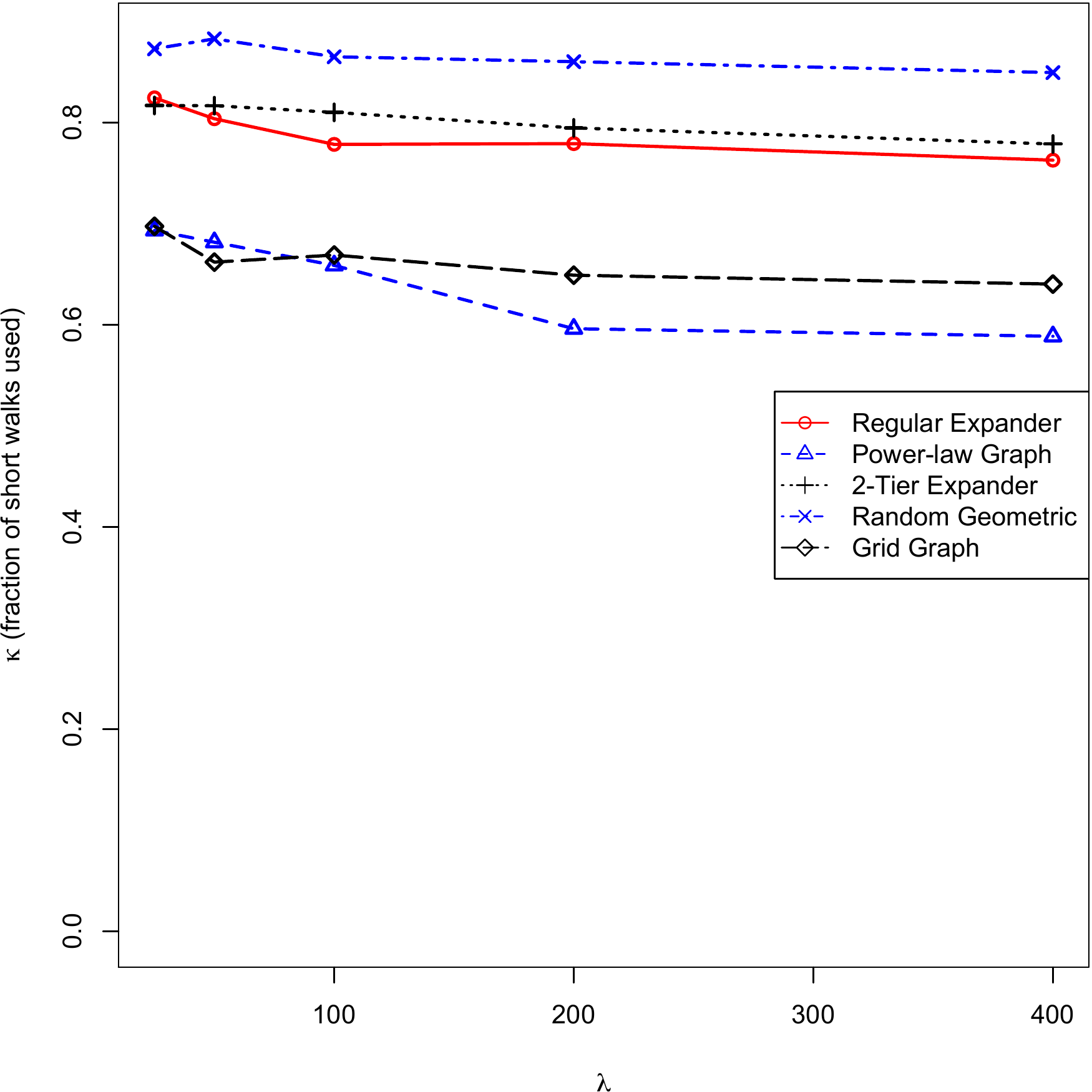}\\
  \caption{varying length of short walk $\lambda$. $n = 10K, \ell =n, \eta = \log n$}
  \label{fig:plot4}
\end{figure}

\noindent{\bf Varying $\lambda$} [Figure~\ref{fig:plot4}]: The default value of $\lambda$ is $\sqrt{\ell}$. In this plot, we vary $\lambda$ from $0.25\sqrt{\ell}$ to $\sqrt{\ell}$ in doubling steps. We see that the utilization roughly remains the same throughout the plot. Even though the algorithm needs to choose $\lambda$ to optimize for rounds and messages, this plot shows that for any of the values, it performs well.

\noindent {\bf Summary of observed round complexity:} To summarize the plots for varying different parameters on the $x$-axis, we see that in all the plots, the value of $\kappa$ on the $y$-axis is a constant and usually at least $0.5$. Since $\kappa$ is $1$ for optimal or perfect utilization of the table, we see that for all parameter values, the utilization is only a small constant factor (around $2$) away from the optimal. Therefore, the round complexity, as proven in the previous section, increases only marginally. 
\newpage
\subsection{Message complexity plots}

\begin{figure}[htbp]
  \centering
  \includegraphics[width=0.6\linewidth]{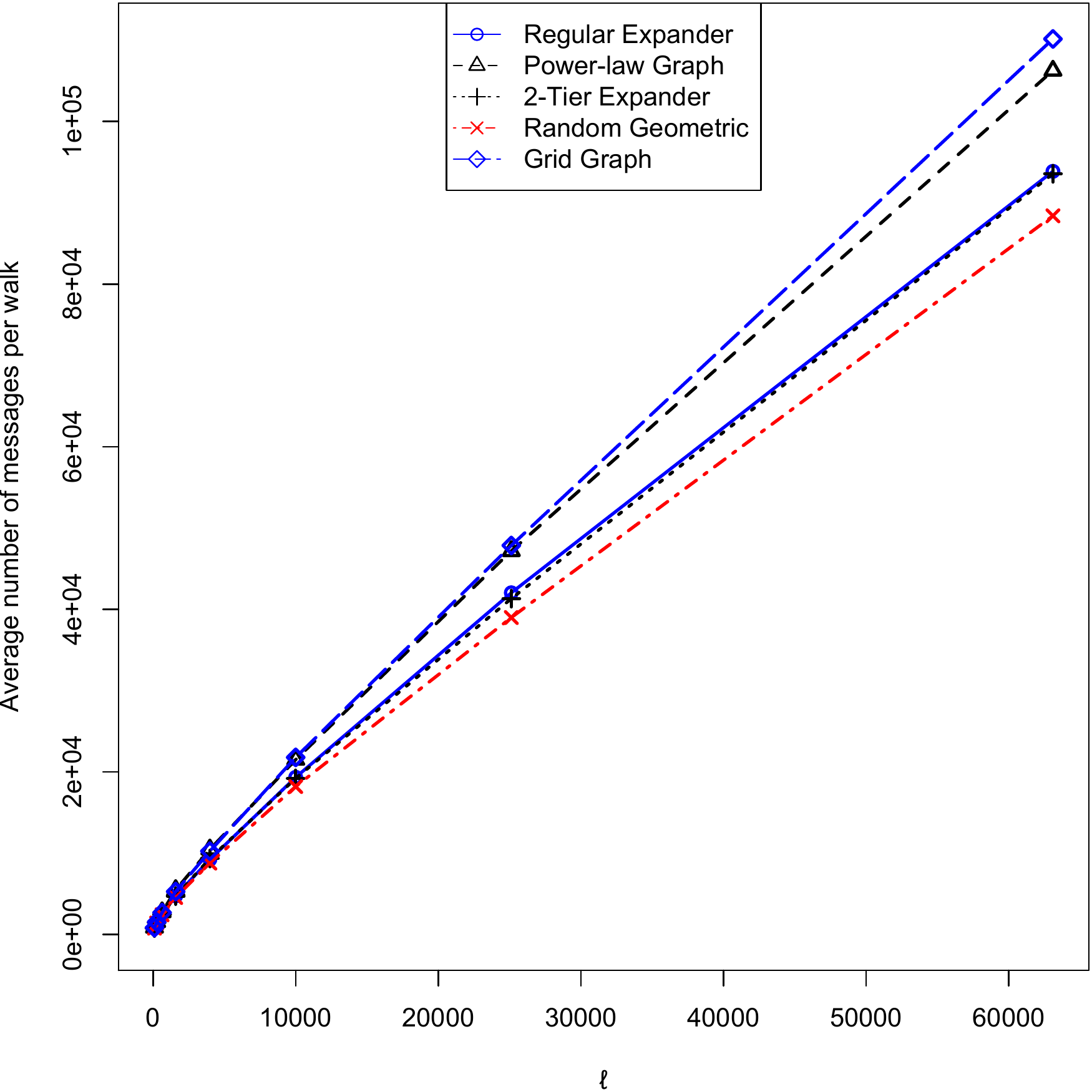}\\
  \caption{varying length of the walk $\ell$. $n=10K, \eta = 1, \lambda = \sqrt{\ell}$}
  \label{fig:Mplot1}
\end{figure}

\noindent{\bf Varying $\ell$} [Figure~\ref{fig:Mplot1}]: In this plot, we vary $\ell$ and note the message complexity of the algorithm {\sc Continuous-Random-Walk}, per random walk {\sc Single-Random-Walk} request within it. For any walk $\ell$ the optimal number of messages would be $\ell$ itself. Notice that in our plot also, all the lines (that is for all the graphs) are very close to the $x = y$ line, which is the optimal line. Therefore, the efficiency of {\sc Continuous-Random-Walk} amortized is almost the best possible.

\begin{figure}[htbp]
  \centering
  \includegraphics[width=0.6\linewidth]{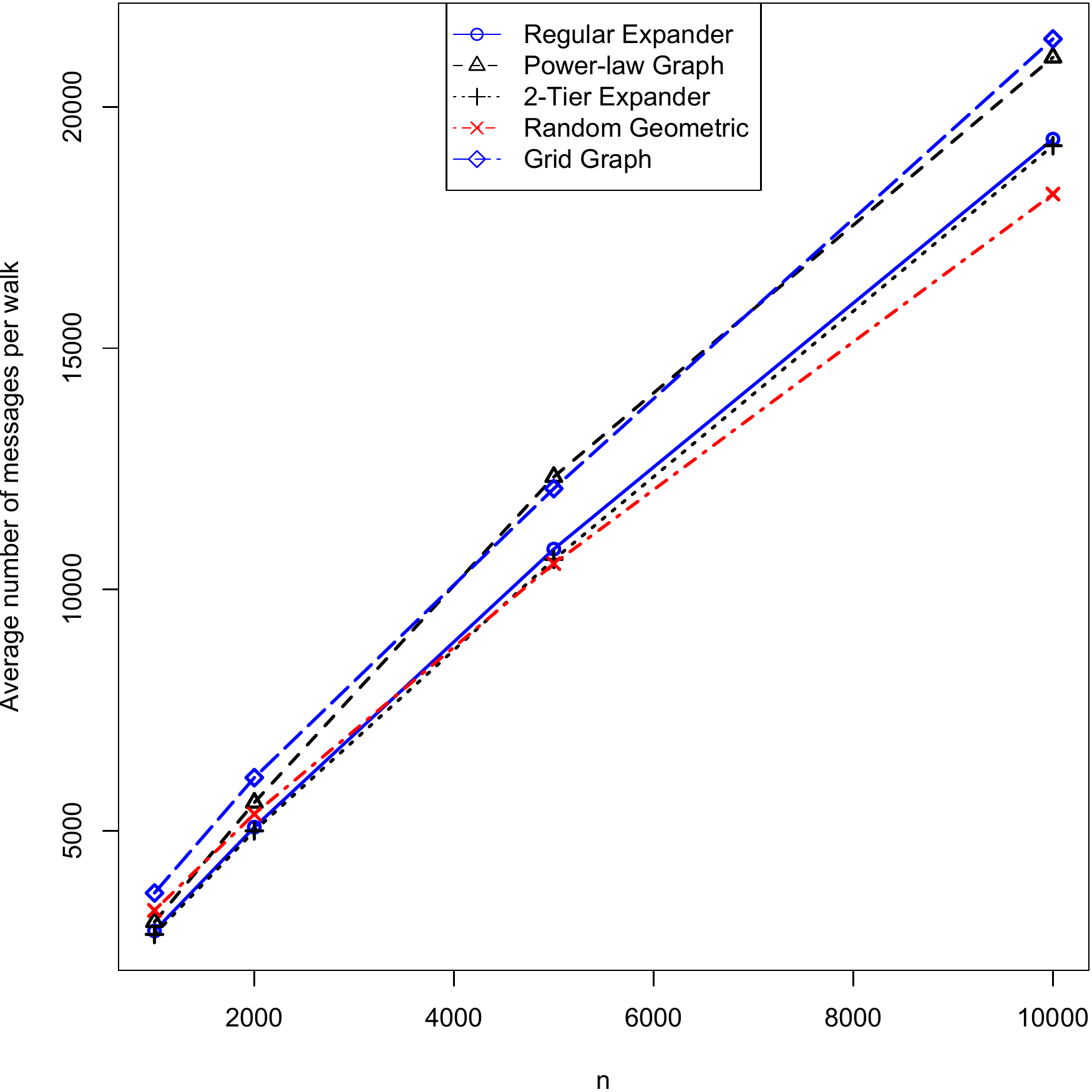}\\
  \caption{varying number of nodes n. $\ell =$n, $\eta = \mbox{ logn },\lambda = \sqrt{\ell}$}
  \label{fig:Mplot2}
\end{figure}

\noindent{\bf Varying $n$} [Figure~\ref{fig:Mplot2}]: In this plot as well, since we use the default value of $\ell = n$, the best possibility is for the message complexity to be $n$, which corresponds to the $x = y$ line. Notice that again for all the graphs, the lines for message complexity, through the entire range, is almost the best possible; this is because we get straight lines with the slope being very close to $x = y$ line. 

\begin{figure}[htbp]
\centering
\includegraphics[width = 0.6\linewidth]{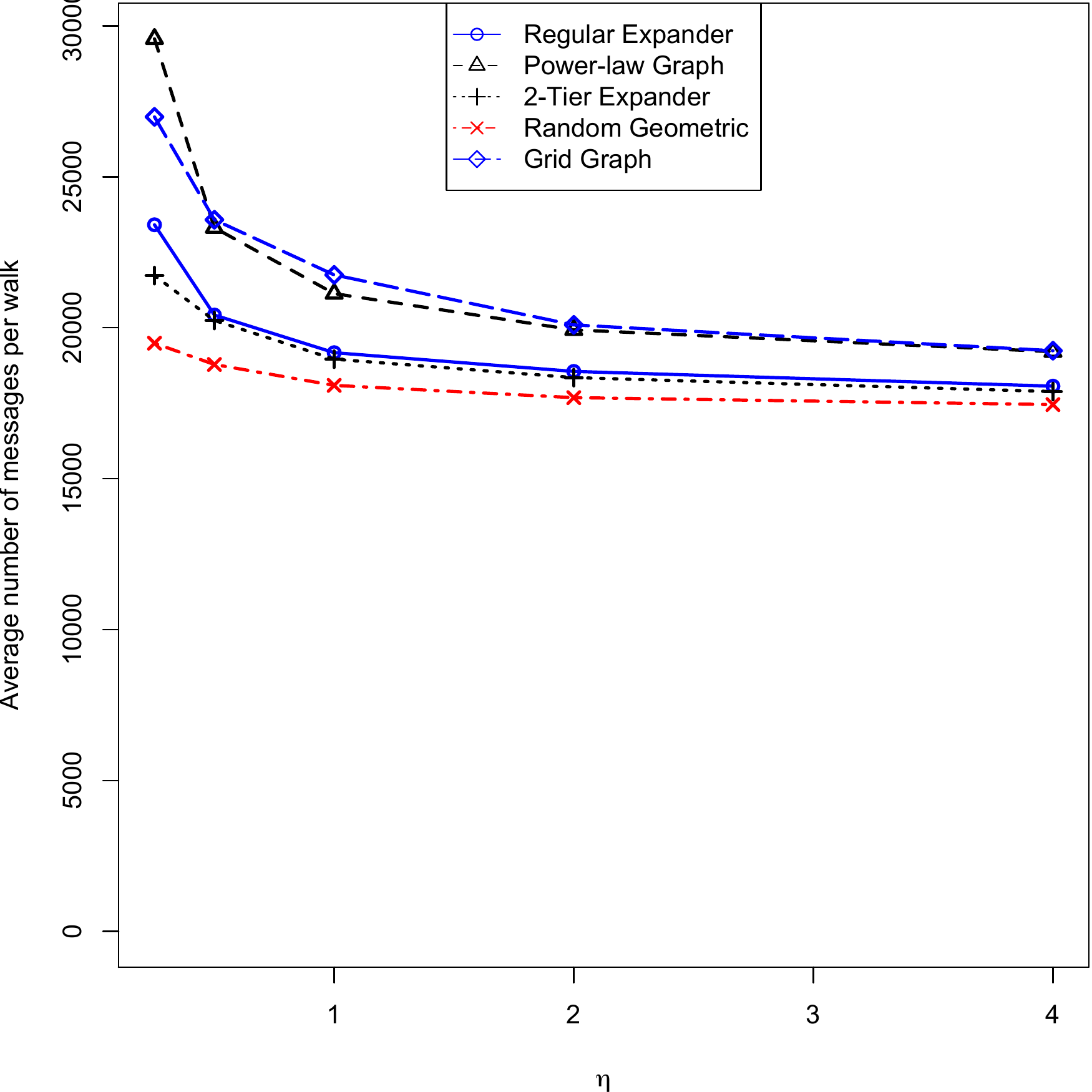}
\caption{varying number of short walks $\eta$. $n = 10K, \ell =n, \lambda = \sqrt{\ell}$}
\label{fig:Mplot3}
\end{figure}

\noindent{\bf Varying $\eta$} [Figure~\ref{fig:Mplot3}]: As $\eta$ is increased between $0.25$ and $4$, we see that the message complexity reduces rapidly. It is expected that as the number of pre-processing rows are increased, the efficiency would improve and therefore message complexity also improves. This plots sharp decline, however, also suggests that just a small enough $\eta$ is also sufficient to drastically bring down the message complexity close to optimal, regardless of what the graph topology is. 

\begin{figure}[htbp]
  \centering
  \includegraphics[width=0.6\linewidth]{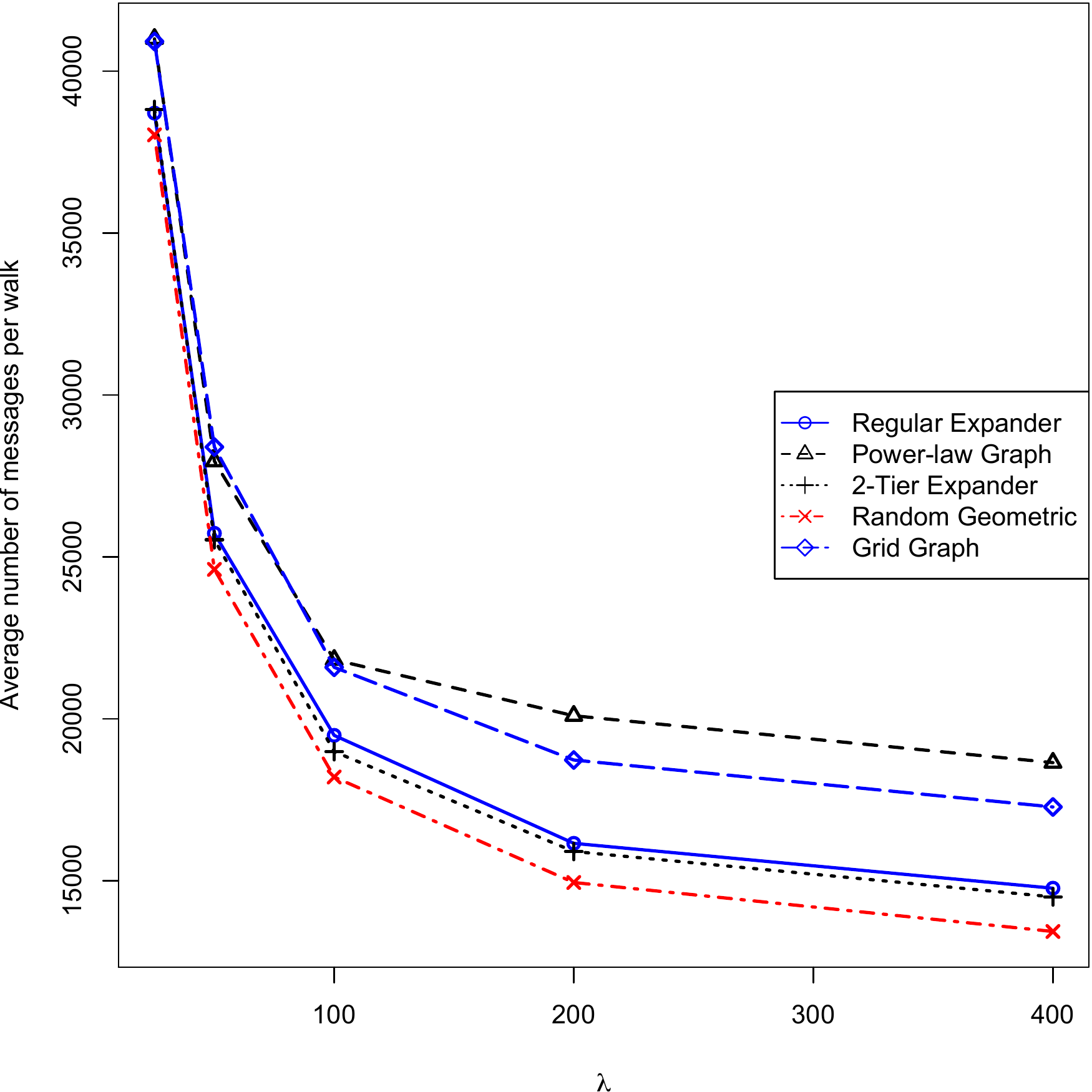}\\
  \caption{varying length of short walk $\lambda$. $n = 10K, \ell =n, \eta = \log n$}
  \label{fig:Mplot4}
\end{figure}

\noindent{\bf Varying $\lambda$} [Figure~\ref{fig:Mplot4}]: This plot is very similar to that of varying $\eta$, here we see again that as $\lambda$ is increased, the message complexity goes down rapidly. Recall that here we are comparing different $\lambda$ values for a fixed $\ell$ value of $n$. Our algorithm {\sc Continuous-Random-Walk} uses $\lambda = \sqrt{\ell}$ but we tried this plot with even smaller values of $\lambda$. As expected, the message complexity is high initially, however, as $\lambda$ is increased close to $\sqrt{\ell}$, the message complexity rapidly reduces, and improves the algorithm performance substantially.

\noindent {\bf Summary of observed message complexity:} In the naive approach, while each random walk requires $O(\ell)$ messages, the round complexity is increased significantly. At the other extreme, each random walk in~\cite{DasSarmaNPT10} was round-efficient but required $\Omega(m)$ messages! Our algorithm of {\sc Continuous-Random-Walk} achieves the best of both worlds by guaranteeing best-possible message and round complexity for graph topologies. The experiments suggest that for a wide range of parameters, the algorithm is able to answer each {\sc Single-Random-Walk} request in a continuous manner with very few messages or rounds. These results corroborate our theoretical guarantees and highlight the practicality of our technique.

\section{Conclusion}\label{sec:conclusion}
We present near-optimal distributed algorithms for random walk sampling in networks.
Since node sampling is useful in various networking applications,
our algorithms can serve as building blocks in a variety of distributed networking applications.

\newpage

  \let\oldthebibliography=\thebibliography
  \let\endoldthebibliography=\endthebibliography
  \renewenvironment{thebibliography}[1]{%
    \begin{oldthebibliography}{#1}%
      \setlength{\parskip}{0ex}%
      \setlength{\itemsep}{0ex}%
  }%
  {%
    \end{oldthebibliography}%
  }
{\small
\bibliographystyle{abbrv}
\bibliography{Distributed-RW}
}

\end{document}